\newtheorem{Lemma}{Lemma} %numbering by section
\newtheorem{theorem}{Theorem} %numbering by default: (1), (2),...
\author[1]{Sheng Dai}
    \author[2]{Timo Kuosmanen}
    \author[3]{Xun Zhou}
    \affil[1~]{School of Economics, Zhongnan University of Economics and Law, 430073 Wuhan, China}
    \affil[2~]{Turku School of Economics, University of Turku, 20500 Turku, Finland}
    \affil[3~]{Surrey Business School, University of Surrey, Guildford GU2 7XH, UK}
\title{\bf Orthogonality conditions for convex regression\footnote{
    \textit{E-mail addresses:} \texttt{sheng.dai@zuel.edu.cn (S. Dai)}, \texttt{timo.kuosmanen@utu.fi (T. Kuosmanen)},\\
    \hspace*{34mm} \texttt{x.zhou@surrey.ac.uk (X. Zhou)}.}}
\date{May 30, 2025}
\begin{document}
% revising the caption Figure XX: to Fig. XX.; Table XX: to Table XX.
\captionsetup[figure]{labelfont={bf},labelformat={default},labelsep=period,name={Fig.}}
\captionsetup[table]{labelfont={bf},labelformat={default},labelsep=period,name={Table}}

\maketitle

\vfill

\vfill

\begin{abstract}
\noindent Econometric identification generally relies on orthogonality conditions, which usually state that the random error term is uncorrelated with the explanatory variables. In convex regression, the orthogonality conditions for identification are unknown. Applying Lagrangian duality theory, we establish the sample orthogonality conditions for convex regression, including additive and multiplicative formulations of the regression model, with and without monotonicity and homogeneity constraints. We then propose a hybrid instrumental variable control function approach to mitigate the impact of potential endogeneity in convex regression. The superiority of the proposed approach is shown in a Monte Carlo study and examined in an empirical application to Chilean manufacturing data.
\\[5mm]
\noindent{{\bf Keywords}: Endogeneity, convex regression, control function, instruments, consistency}
\end{abstract}
\vfill

\thispagestyle{empty}
%-----------------%
%
%-----------------%
\newpage
\setcounter{page}{1}
\setcounter{footnote}{0}
\pagenumbering{arabic}
\baselineskip 20pt

%-----------------%
%
%-----------------%

\section{Introduction}\label{sec:intro}

Shape-constrained nonparametric regression avoids restrictive assumptions about the functional form of the regression function by building upon convexity \citep{Hildreth1954} and/or monotonicity constraints \citep{Brunk1955, brunk1958}. Convexity and monotonicity constraints are particularly relevant in the microeconomic applications where the theory implies certain monotonicity and convexity/concavity properties for many functions of interest \citep[e.g.,][]{Afriat1967, Afriat1972, Varian1982, Varian1984}. For example, the cost function of a firm must be monotonic increasing and convex with respect to the input prices. The recent developments in the convex regression enable researchers to impose concavity or convexity constraints implied by the theory to estimate the functions of interest without any parametric functional form assumptions. Since the development of an explicit piecewise linear characterization by \citet{Kuosmanen2008}, convex regression has attracted growing interest in econometrics, statistics, operations research, machine learning, and related fields \citep[e.g.,][]{Seijo2011, Lim2012, Hannah2013, Mazumder2019, Yagi2018, Bertsimas2020}. Extensions of the convex regression to the quantile and expectile estimation have proved useful for estimating marginal abatement costs of pollution emissions \citep{Kuosmanen2020b, Dai2023, Dai2025}.
 
In the linear regression, the random error term is assumed to be uncorrelated with the explanatory variables. In the generalized method of moments (GMM) interpretation of the linear regression, this property is called the population orthogonality condition. It is easy to show that the residuals of the ordinary least squares (OLS) estimator satisfy the corresponding sample orthogonality condition, that is, OLS residuals do not correlate with any of the regressors by construction. Essential statistical properties such as unbiasedness and consistency of the OLS estimator critically depend on the orthogonality condition. If the orthogonality condition does not hold, the problem is referred to as endogeneity. The standard econometric approach is then to find instrumental variables (IVs) that are highly correlated with the regressors but uncorrelated with the error term.

The orthogonality conditions for convex regression remain unknown. While the statistical consistency of the convex regression has been proved assuming the standard orthogonality condition, we do not know how convex regression behaves if regressors are endogenous. Consequently, practitioners are unable to evaluate the suitability of convex regression in applications where explanatory variables may be endogenous, and they lack the means to address potential endogeneity bias. The contributions of this paper are three-fold:

The first contribution of this paper is to formally analyze the sample orthogonality for convex regression. Applying Lagrange duality, we establish the sample orthogonality conditions for the most essential variants of convex regression, including additive and multiplicative formulations of the regression model, with and without monotonicity and homogeneity constraints. The results of this paper are relevant to practitioners who need to understand what kind of exogeneity assumptions are required for different variants of convex regression. 

The second contribution of this paper is to propose a hybrid IV control function approach integrated into convex regression. The classic two-stage least squares (2SLS) approach can be adapted to correct the endogeneity bias in convex regression, but the nonparametric IV estimation may lead to poor statistical performance due to its ill-posedness \citep{Chetverikov2017}. Moreover, it is difficult to estimate convex regression with endogeneity (a pure nonparametric regression model) because the 2SLS approach is not directly transferable to a nonlinear or nonparametric setting \citep{Yatchew2003}. Alternatively, the control function is commonly used to address the simultaneity bias \citep[see, e.g.,][]{Olley1996, Levinsohn2003, Ackerberg2015} or the measurement error in inputs \citep[see, e.g.,][]{Collard2016, Dong2021} when estimating the production function. In the spirit of fully utilizing the IV and control function, we propose a hybrid IV control function-based convex regression to address the potential endogeneity problem. The consistency of the proposed approach is established.

Our third contribution is to conduct three Monte Carlo experiments in the context of the classic Cobb–Douglas production function. The first experiment examines the finite sample performance of OLS and convex regression estimators under endogeneity. The second experiment compares the finite sample performance of convex regression and convex regression with the 2SLS and IV control function approaches, where the monotonicity is relaxed. The third experiment tests whether the performance of convex regression with monotonicity improves or deteriorates when the IV and IV control function approaches are introduced. An empirical application to the Chilean manufacturing production data illustrates whether the hybrid IV control function approach empirically makes a difference and whether monotonicity should be avoided in convex regression.

We further shed new light on the question: if the sample orthogonality does not hold, then how does the performance of convex regression with IV control function? In the existing literature, there is little evidence of whether the performance should improve or deteriorate after using the IV control function or even the 2SLS approach. Our simulation findings suggest that the convex regression with monotonicity constraints (i.e., the sample orthogonality condition cannot be held) can improve its performance after using the IV control function approach in the case of endogeneity. 

The rest of this paper is organized as follows. Section~\ref{sec:cnls} briefly introduces the additive and multiplicative convex regression. Section~\ref{sec:condit} examines the sample orthogonality conditions in the cases with and without monotonicity. The two-stage IV control function-based convex regression is developed in Section~\ref{sec:cf}. Sections~\ref{sec:mc} and \ref{sec:appl} implement the Monte Carlo simulations and empirical application to Chilean manufacturing data. Section~\ref{sec:conc} concludes the paper with suggested future avenues. The formal proofs and additional tables and figures are attached in the Appendix. 

%-----------------%
%
%-----------------%

\section{Preliminaries on convex regression}\label{sec:cnls}

\subsection{Additive model}

Consider the following multivariate additive model
\begin{equation}
	y_i = f(\bx_i) + \varepsilon_i
	\label{eq:eq1}
\end{equation}
where $y_i$ is the dependent variable, $\bx_i$ is a $k$-dimensional vector of explanatory variables, and $\varepsilon_i$ is a random error term with zero mean and a constant and finite variance. $f:\real_+^k \rightarrow \real$ is assumed to be a (homogeneous) concave/convex regression function with an unknown functional form \citep[see, e.g.,][]{Kuosmanen2008, Seijo2011, Kuosmanen2012c}. 

Building on the results by \citeauthor{Afriat1967} (\citeyear{Afriat1967}, \citeyear{Afriat1972}), \citet{Kuosmanen2008} develops the first operational solution to the multivariate convex regression problem.\footnote{
    In this paper, we focus on a concave $f$, but a convex $f$ can be modeled by reversing the sign of the Afriat inequalities \citep{Afriat1967, Afriat1972}. We use the term ``convex regression'' because in both cases $f$ is a support of a convex set, whereas concave sets do not exist. See \citet{Liao2024} for the case of convex $f$.
}
Specifically, we consider the least squares estimator of multivariate convex regression

\begin{alignat}{2}
    \min \, &\dfrac{1}{2}\sum_{i=1}^n\varepsilon_i^2 \label{eq:eq2} \\
    \mbox{\textit{s.t.}}\quad 
    &  y_i = f(\bx_i) + \varepsilon_i\text{  }\forall i  \tag{2a} \label{eq:2a} \\
    &  f \text{ } \text{is concave} \tag{2b} \label{eq:2b} \\
    &  f \text{ } \text{is monotonic increasing} \tag{2c} \label{eq:2c} \\
    &  f \text{ } \text{is linearly homogeneous} \tag{2d} \label{eq:2d}
\end{alignat}
where concavity \eqref{eq:2b} is imposed throughout the paper, but monotonicity \eqref{eq:2c} and homogeneity \eqref{eq:2d} are optional. 

\citet{Kuosmanen2008} also proves that the optimal solution to the infinite-dimensional optimization problem \eqref{eq:eq2} is always equivalent to the optimal solution to the following quadratic programming problem with linear constraints (also referred to as the convex nonparametric least squares estimator)
\begin{alignat}{2}
    \underset{\alpha, \bbeta, \varepsilon}{\min} \, & \dfrac{1}{2}\sum_{i=1}^n\varepsilon_i^2 & \quad & \tag{3} \label{eq:eq3} \\
    \text{\textit{s.t.}}\quad 
    & y_i = \alpha_i + \bbeta_i^\prime\bx_i + \varepsilon_i &{\quad}& \forall i  \tag{3a} \label{eq:3a} \\
    & \alpha_i + \bbeta_i^\prime\bx_i \le \alpha_h + \bbeta_h^\prime\bx_i &{\quad}&  \forall i, h   \tag{3b} \label{eq:3b} \\
    & \bbeta_i \ge \bzero &{\quad}& \forall i \tag{3c} \label{eq:3c}\\
    & \alpha_i = 0 &{\quad}& \forall i \tag{3d} \label{eq:3d}
\end{alignat}
where the first set of constraints \eqref{eq:3a} restates the regression equation \eqref{eq:eq1} in terms of a piecewise linear approximation of the true but unknown regression function $f$. \citet{Kuosmanen2008} has proved that without any loss of generality, the estimated regression function can be parametrized as a piecewise linear function and characterized by the coefficients $\alpha$ and $\bbeta$.\footnote{
    It is worth noting that there exist alternative parametrizations of concavity (e.g., used by \citealp{Seijo2011}); the advantage of the applied parametrization in this paper is that we can easily impose monotonicity and linear homogeneity as well. 
}
The second set of constraints \eqref{eq:3b} enforces the concavity of the piecewise linear regression function (reversing the sign of the inequality imposes convexity). The third set of constraints \eqref{eq:3c} enables the regression function to be monotonic increasing (reversing the sign makes the function monotonic decreasing). The fourth constraint set in \eqref{eq:3d} enforces linear homogeneity of the regression function.

\subsection{Multiplicative model}
\setcounter{equation}{3} 

While the literature on convex regression focuses almost exclusively on the additive model introduced above, in economic applications, it is often more natural to posit a multiplicative model (\citeauthor{Kuosmanen2012c}, \citeyear{Kuosmanen2012c})
\begin{equation}
    y_i = f(\bx_i)\exp({\varepsilon_i})
    \label{eq:eq2.2.1}
\end{equation}

Taking the natural logarithm of both sides yields
\begin{equation}
    \ln{y_i} = \ln{f(\bx_i)} + \varepsilon_i,
    \label{eq:eq2.2.2}
\end{equation}
where the multiplicative formulation is not just an inconsequential data transformation, because $\ln{f(\bx_i)} \neq f(\ln({\bx_i}))$ and hence the orthogonality conditions of the multiplicative formulation deserve a separate treatment.

Accordingly, the multiplicative convex regression estimator is formulated as 
\begin{alignat}{2}
    \underset{\alpha, \bbeta, \varepsilon}\min \, &\dfrac{1}{2}\sum_{i=1}^n\varepsilon_i^2 &{\quad}& \label{eq:eq4}\\
    \mbox{\textit{s.t.}}\quad 
    &  \ln y_i=\ln (\alpha_i+\bbeta^\prime_i\bx_i)+\varepsilon _{i}^{{}} &{\quad}& \forall i    \tag{6a} \label{eq:5a} \\
    &  \alpha_i + \bbeta_i^\prime\bx_i \le \alpha_h + \bbeta_h^\prime\bx_i  &{\quad}&  \forall i, h   \tag{6b} \label{eq:5b} \\
    &  \bbeta_i \ge \bzero &{\quad}& \forall i  \tag{6c} \label{eq:5c} \\
    & \alpha_i = 0 &{\quad}& \forall i \tag{6d} \label{eq:5d}
\end{alignat}
where the first set of constraints \eqref{eq:5a} restates the logarithm of the multiplicative function model \eqref{eq:eq2.2.2}, and the other three constraints \eqref{eq:5b}-\eqref{eq:5d} are also used to guarantee the concavity, monotonicity, and linear homogeneity of the regression function $f$, respectively. 

%-----------------%
%
%-----------------%

\section{Orthogonality conditions}\label{sec:condit}

In this section, we systematically analyze the sample orthogonality conditions of convex regression in various cases, including the additive and multiplicative models with and without monotonicity.

\subsection{Convex regression subject to concavity}\label{orth1}

We first consider the additive convex regression problem~\eqref{eq:eq3} without monotonic and linearly homogeneous assumptions (i.e., excluding constraints \eqref{eq:3c} and \eqref{eq:3d}). The Lagrangian function of convex regression without monotonicity and linear homogeneity is stated as 
\begin{align*}
    \Ls(\blambda,\balpha,\bbeta,\bmu)=&\dfrac{1}{2}\sum\limits_{i=1}^n \varepsilon_i^2+\sum\limits_{i=1}^n\mu_i(y_i-\alpha_i-\bbeta_i^\prime\bx_i-\varepsilon_i) \\
    & + \sum\limits_{i=1}^n{\sum\limits_{h=1}^n\lambda_{ih}\left[\alpha_i-\alpha_h+(\bbeta _i-\bbeta_h)^\prime\bx_i \right]}
\end{align*}
where $\lambda_{ih} \ge 0$ and $\mu_i$ are the Lagrangian multipliers (shadow prices) of the concavity constraint~\eqref{eq:3b} and regression equation~\eqref{eq:3a}. Problem \eqref{eq:eq3} is convex, and the optimal solution exists and is unique. Furthermore, the optimal solution of problem \eqref{eq:eq3} satisfies the first-order conditions, which are the first-order derivatives of the Lagrangian function $\Ls$,
\begin{align}
& \dfrac{\partial \Ls}{\partial \varepsilon_i}=e_i - \mu_i=0, \forall i \label{eq:eq5}\\
& \dfrac{\partial \Ls}{\partial \alpha_i}=-\mu_i+\sum\limits_{h=1}^n \lambda_{ih} -\sum\limits_{h=1}^n \lambda_{hi}=0, \forall i \label{eq:eq6}\\
& \dfrac{\partial \Ls}{\partial \beta_{ik}}=-\mu_i x_{ik}+\sum\limits_{h=1}^n \lambda_{ih} x_{ik} - \sum\limits_{h=1}^n\lambda_{hi} x_{hk}=0, \forall i, k. \label{eq:eq7}
\end{align}
where $e_i$ is the estimates of $\varepsilon_i$.\footnote{
    In order to keep the symbols easy to read, we avoid using other symbols for the estimated values of other variables such as $\alpha$ and $\bbeta$.
}
Furthermore, the optimal solution necessarily satisfies the complementary slackness conditions (i.e., the Karush-Kuhn-Tucker conditions), 
\begin{equation}
    \label{eq:eq8}
    \lambda_{ih}[\alpha_i-\alpha_h+(\bbeta_i-\bbeta_h)^\prime\bx_i]=0
\end{equation}

Conditions \eqref{eq:eq5}-\eqref{eq:eq8} make a foundation to prove several properties of convex regression.
\begin{Lemma}
    In the optimal solution to the convex regression problem~\eqref{eq:eq3}, the sum of residuals is equal to zero, $\sum\limits_{i=1}^ne_i=0$.
    \label{prop1}
\end{Lemma}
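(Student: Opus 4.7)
The plan is to derive the result directly from the Karush--Kuhn--Tucker system \eqref{eq:eq5}--\eqref{eq:eq8} that the author has already set up for problem \eqref{eq:eq3} (with constraints \eqref{eq:3c} and \eqref{eq:3d} dropped, as specified at the start of Section~\ref{orth1}). The key observation is that the intercepts $\alpha_i$ appear in the problem only through the concavity constraints \eqref{eq:3b} and the regression equations \eqref{eq:3a}, and they are otherwise free; hence the stationarity condition \eqref{eq:eq6} in $\alpha_i$ must encode a cancellation that ties the residuals (through $\mu_i$) to a symmetric double sum of multipliers.

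Concretely, I would sum the first-order condition \eqref{eq:eq6} over $i = 1, \ldots, n$ to obtain
\begin{equation*}
    -\sum_{i=1}^n \mu_i + \sum_{i=1}^n \sum_{h=1}^n \lambda_{ih} - \sum_{i=1}^n \sum_{h=1}^n \lambda_{hi} = 0.
\end{equation*}
The next step is to point out that the two double sums are merely reindexings of the same collection of multipliers $\{\lambda_{ih}\}_{i,h=1}^n$: swapping the dummy indices $i \leftrightarrow h$ in the third term yields the second term, so they cancel. This collapses the identity to $\sum_{i=1}^n \mu_i = 0$. Finally, I invoke the stationarity condition \eqref{eq:eq5}, which gives $\mu_i = e_i$ for every $i$, and conclude that $\sum_{i=1}^n e_i = 0$.

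There is no real obstacle here: the argument is a one-line manipulation of the KKT system, and neither the concavity complementary slackness \eqref{eq:eq8} nor the $\beta$-stationarity \eqref{eq:eq7} is needed. If anything warrants care, it is only the bookkeeping justification for why the two $\lambda$ double sums coincide (they range over the same index set $\{1,\ldots,n\}^2$), and the reminder that \eqref{eq:eq6} is an \emph{equality} condition because $\alpha_i$ is unrestricted in sign once \eqref{eq:3d} has been removed---which is precisely the structural feature that delivers the zero-sum property of the residuals and parallels the familiar OLS-with-intercept result.
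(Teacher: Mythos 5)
Your proposal is correct and follows essentially the same route as the paper: both arguments rest on the $\alpha_i$-stationarity condition \eqref{eq:eq6} together with $\mu_i=e_i$ from \eqref{eq:eq5}, and both conclude by noting that the two double sums of the multipliers $\lambda_{ih}$ cancel after swapping the dummy indices. The only (immaterial) difference is the order of operations—you sum \eqref{eq:eq6} first and then substitute, whereas the paper substitutes first and then sums.
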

\begin{proof}
    See Appendix~\ref{sec:proofs}.
\end{proof}

\citet{Seijo2011} apply Moreau's decomposition theorem to prove an equivalent result $\sum\limits_{i=1}^n y_i=\sum\limits_{i=1}^n \hat{y}_i$. The proof of Lemma~\ref{prop1} is an intuitive direct proof based on the first-order conditions of convex regression. This strategy also enables us to prove the following Theorem~\ref{theo1}.
\begin{theorem}
    The optimal solution to the additive convex regression problem subject to concavity constraints satisfies the sample orthogonality condition  $\sum\limits_{i=1}^ne_ix_{ik}=0 \,\ \forall k$.
    \label{theo1}
\end{theorem}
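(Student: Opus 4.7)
The plan is to mirror the strategy already used for Lemma~\ref{prop1}: read off the desired sample moment condition directly from the first-order conditions of the Lagrangian, without touching the complementary slackness condition \eqref{eq:eq8}. The three FOCs \eqref{eq:eq5}--\eqref{eq:eq7} already contain everything we need, since \eqref{eq:eq5} identifies the residual $e_i$ with the multiplier $\mu_i$, and \eqref{eq:eq7} is exactly a statement about $\mu_i x_{ik}$.

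Concretely, I would start from \eqref{eq:eq7}, which rearranges to
\[
    \mu_i x_{ik} \;=\; \sum_{h=1}^n \lambda_{ih} x_{ik} \;-\; \sum_{h=1}^n \lambda_{hi} x_{hk}, \qquad \forall i,k.
\]
Summing both sides over $i=1,\dots,n$ yields, on the left, $\sum_i \mu_i x_{ik}$, and on the right a difference of two double sums over the pair $(i,h)$. The key step is to relabel the dummy indices in the second double sum: swapping $i \leftrightarrow h$ in $\sum_i \sum_h \lambda_{hi} x_{hk}$ produces exactly $\sum_i \sum_h \lambda_{ih} x_{ik}$, so the two double sums cancel. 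Hence $\sum_i \mu_i x_{ik} = 0$ for every coordinate $k$, and substituting $\mu_i = e_i$ from \eqref{eq:eq5} gives the claimed sample orthogonality condition.

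There is essentially no hard step in this argument; the only thing that needs care is the index relabeling, so one should be explicit about why $\sum_{i,h}\lambda_{hi} x_{hk}$ and $\sum_{i,h}\lambda_{ih} x_{ik}$ are equal (they run over the same set of ordered pairs $(i,h)\in\{1,\dots,n\}^2$, and each term only depends on the pair through the first index of $\lambda$ and the matching row of $x$). Notably, the nonnegativity $\lambda_{ih}\ge 0$ and the KKT condition \eqref{eq:eq8} play no role here, which is why the argument does not extend verbatim once the monotonicity multipliers from \eqref{eq:3c} are added to the Lagrangian in later subsections.
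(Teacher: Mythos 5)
Your argument is correct and is essentially identical to the paper's own proof: both combine \eqref{eq:eq5} with \eqref{eq:eq7}, sum over $i$, and cancel the two double sums by relabeling the dummy indices $i\leftrightarrow h$. Your added remark that the relabeling works because both sums range over the same set of ordered pairs is a welcome bit of explicitness that the paper leaves implicit.
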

\begin{proof}
    See Appendix~\ref{sec:proofs}. 
\end{proof}

This result is important for understanding the underlying statistical assumptions of the convex regression estimator. Recall that the OLS estimator of the linear regression model satisfies exactly the same sample orthogonality condition. The corresponding population orthogonality condition (also referred to as the exogeneity condition) is 
\begin{equation}
    \E\left[\varepsilon_i x_{ik} \right] = 0 \,\ \forall k.
    \label{eq:eq12}
\end{equation}

In other words, the random error term $\varepsilon$ is assumed to be uncorrelated with regressors $\bx$. Exactly the same exogeneity condition applies to convex regression. Hence, the convex regression estimator is subject to endogeneity problems similar to those of the OLS estimator. To address such endogeneity problems, note that the convex regression estimator can be understood as a GMM estimator when the population orthogonality condition \eqref{eq:eq12} holds. The convex regression estimator is also a maximum likelihood estimator if $\varepsilon_i$ is normally distributed, as already noted by \citet{Hildreth1954}. Therefore, it is possible to apply an IV in the context of convex regression, stating the population orthogonality condition \eqref{eq:eq12} in terms of the instrument $z$ that correlates with $x$ but is uncorrelated with $\varepsilon$ (we revisit this in detail below). 

We know from Seijo and Sen (\citeyear{Seijo2011}, Lemma 2.4, part ii) the following weaker result,
\begin{equation*}
    \sum\limits_{i=1}^n \hat{y}_i(y_i-\hat{y}_i)=\sum\limits_{i=1}^n e_i \hat{y}_i=0.
\end{equation*}

Theorem~\ref{theo1} implies this result, but the converse is not true. Note that the population orthogonality corresponding to Seijo and Sen's result is
\begin{equation*}
    \E\left[\varepsilon_i f(\bx_i) \right]=0,
\end{equation*}
which is the standard exogeneity condition in parametric nonlinear regression. Interestingly, Theorem~\ref{theo1} demonstrates that convex regression also satisfies the stronger sample orthogonality condition of linear regression. It is obviously critically important to know exactly what kind of assumptions must be made before applying the estimator. 

Considering next the multiplicative convex regression problem~\eqref{eq:eq4} without monotonic and linear homogeneous assumptions (i.e., excluding constraints \eqref{eq:5c} and \eqref{eq:5d}), we have the following Lagrangian function 
\begin{alignat*}{2}
    \Ls(\blambda,\balpha,\bbeta,\bmu)=&\dfrac{1}{2}\sum\limits_{i=1}^n\varepsilon_i^{2}+\sum\limits_{i=1}^n\mu_i\left[\ln (y_i)-\ln (\alpha_i+\bbeta_i^\prime\bx_i)-\varepsilon_i\right] \\
    &+\sum\limits_{i=1}^n\sum\limits_{h=1}^n\lambda_{ih}\left[\alpha_i-\alpha_h+(\bbeta_i-\bbeta_h)^\prime\bx_i\right].
\end{alignat*}

Denoting $\hat{y}_i = \alpha_i + \bbeta_i^\prime\bx_i$ and differentiating, we have the first-order condition \eqref{eq:eq5} and the following
\begin{align}
	& \dfrac{\partial \Ls}{\partial \alpha_i}=-\mu_i\dfrac{1}{\hat{y}_i}+\sum\limits_{h=1}^n \lambda_{ih}-\sum\limits_{h=1}^n \lambda_{hi}=0 \,\ \forall i, \label{eq:eq16} \\
	& \dfrac{\partial \Ls}{\partial \beta_{ik}}=-\mu_i\dfrac{x_{ik}}{\hat{y}_i}+\sum\limits_{h=1}^n \lambda_{ih} x_{ik} - \sum\limits_{h=1}^n \lambda_{hi}x_{hk} = 0 \,\ \forall i, k.
	\label{eq:eq17} 
\end{align}

Conditions \eqref{eq:eq5} and \eqref{eq:eq16} imply 
\begin{equation*}
	e_i=\hat{y}_{i}\left[ \sum\limits_{h=1}^n\lambda_{ih}-\sum\limits_{h=1}^n\lambda_{hi} \right].
\end{equation*}	

In this case, it is easy to confirm that we cannot conclude a result similar to Lemma \ref{prop1}. That is, the sum of residuals in the optimal solution to \eqref{eq:eq4} is not necessarily zero. However, a close conclusion is stated as
\begin{Lemma}
    In the optimal solution to the multiplicative convex regression problem~\eqref{eq:eq4}, the scaled regression residuals sum to zero, that is, $\sum\limits_{i=1}^n \dfrac{e_i}{\hat{y}_i}=0$.
    \label{prop2}
\end{Lemma}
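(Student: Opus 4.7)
My plan is to mimic the strategy used for Lemma~\ref{prop1}, but applied to the multiplicative first-order conditions rather than the additive ones. The statement already displayed in the excerpt gives me everything I need: combining \eqref{eq:eq5} with \eqref{eq:eq16} yields
\[
\frac{e_i}{\hat{y}_i} \;=\; \frac{\mu_i}{\hat{y}_i} \;=\; \sum_{h=1}^n \lambda_{ih} \;-\; \sum_{h=1}^n \lambda_{hi}, \qquad \forall i.
\]
So rather than proving the scaled residuals individually satisfy anything, I will just sum this identity over $i$.

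The second step is the summation itself. Summing both sides over $i=1,\dots,n$ gives
\[
\sum_{i=1}^n \frac{e_i}{\hat{y}_i} \;=\; \sum_{i=1}^n \sum_{h=1}^n \lambda_{ih} \;-\; \sum_{i=1}^n \sum_{h=1}^n \lambda_{hi}.
\]
The two double sums on the right-hand side are identical after relabelling the dummy indices (swap $i\leftrightarrow h$), so they cancel, leaving $\sum_{i=1}^n e_i/\hat{y}_i = 0$. This is exactly the same algebraic trick that drives the proof of Lemma~\ref{prop1}: the Lagrangian of the pairwise concavity constraint contributes a symmetric double sum of multipliers, and any linear combination of $\sum_h \lambda_{ih} - \sum_h \lambda_{hi}$ over $i$ telescopes to zero.

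I expect no serious obstacle here: the only subtlety is making sure the ratio $e_i/\hat{y}_i$ is well defined, which is fine because $\hat{y}_i = \alpha_i + \bbeta_i^\prime \bx_i > 0$ at any feasible point of \eqref{eq:eq4} (otherwise $\ln \hat{y}_i$ in \eqref{eq:5a} is undefined). The monotonicity/homogeneity constraints \eqref{eq:5c}--\eqref{eq:5d} are excluded in this scenario, so no additional Lagrange multipliers contribute to the first-order condition with respect to $\alpha_i$, and the cancellation argument goes through verbatim. Finally, I would remark that the failure of $\sum_i e_i = 0$ in the multiplicative case (noted just before the lemma) and the success of $\sum_i e_i/\hat{y}_i = 0$ both trace back to the same first-order condition \eqref{eq:eq16}: the factor $1/\hat{y}_i$ produced by differentiating $\ln \hat{y}_i$ is precisely what must be reabsorbed into the residual to restore the telescoping identity.
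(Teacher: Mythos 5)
Your proof is correct and follows essentially the same route as the paper: combine the first-order conditions \eqref{eq:eq5} and \eqref{eq:eq16} to get $e_i/\hat{y}_i = \sum_h \lambda_{ih} - \sum_h \lambda_{hi}$, then sum over $i$ and let the two double sums cancel by relabelling indices. The added remark that $\hat{y}_i>0$ (so the ratio is well defined) is a small, sensible refinement the paper leaves implicit.
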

\begin{proof}
    See Appendix~\ref{sec:proofs}. 
\end{proof}

Lemma~\ref{prop2} is interesting as it builds a framework to study the properties of a multiplicative error model using the ratio of regressors to the fitted value. Based on Lemma~\ref{prop2}, we have the following Theorem~\ref{theo2} to document the sample orthogonality condition for the multiplicative model. 
\begin{theorem}
    The optimal solution to the multiplicative convex regression problem subject to the concavity constraints satisfies the sample orthogonality condition  $\sum\limits_{i=1}^n{e_i\dfrac{x_{ik}}{\hat{y}_i}}=0 \,\ \forall k$.
    \label{theo2}
\end{theorem}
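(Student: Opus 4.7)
The plan is to mirror the strategy used for Theorem~\ref{theo1}, but start from the multiplicative first-order condition \eqref{eq:eq17} rather than the additive one in \eqref{eq:eq7}. The key observation is that \eqref{eq:eq17} differs from \eqref{eq:eq7} only in that $x_{ik}$ is replaced by $x_{ik}/\hat{y}_i$ in the term multiplying $\mu_i$, while the bilinear $\lambda$-terms on the right-hand side have the same index structure. This suggests that the same reindexing cancellation that produced $\sum_i e_i x_{ik}=0$ in the additive case should here produce $\sum_i e_i x_{ik}/\hat{y}_i=0$.

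Concretely, I would first rearrange \eqref{eq:eq17} as
\begin{equation*}
\mu_i\,\frac{x_{ik}}{\hat{y}_i} \;=\; \sum_{h=1}^n \lambda_{ih} x_{ik} \;-\; \sum_{h=1}^n \lambda_{hi} x_{hk},
\end{equation*}
and then sum over $i=1,\dots,n$. On the left, I invoke the stationarity condition \eqref{eq:eq5} in the form $\mu_i=e_i$ to rewrite $\sum_i \mu_i x_{ik}/\hat{y}_i$ as $\sum_i e_i x_{ik}/\hat{y}_i$. On the right, I would show that the two double sums are equal. Writing $S_1=\sum_i\sum_h \lambda_{ih} x_{ik}$ and $S_2=\sum_i\sum_h \lambda_{hi} x_{hk}$, a simple relabeling of the dummy indices $i\leftrightarrow h$ in $S_2$ yields $S_2=\sum_h\sum_i \lambda_{ih} x_{ik}=S_1$. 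Hence $S_1-S_2=0$ and the identity claimed in Theorem~\ref{theo2} follows at once.

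Unlike in Theorem~\ref{theo1}, I would \emph{not} need Lemma~\ref{prop2} as a preliminary, because the sample orthogonality condition falls out of \eqref{eq:eq17} alone; the scaled-residual identity of Lemma~\ref{prop2} is the analogue of Lemma~\ref{prop1} obtained from the $\alpha_i$ first-order condition \eqref{eq:eq16}, not the $\beta_{ik}$ one. I do not foresee a real obstacle: the only substantive step is recognizing the index-swap cancellation, and complementary slackness \eqref{eq:eq8} is not required here. The one minor subtlety is that $\hat{y}_i>0$ is implicitly used so that $x_{ik}/\hat{y}_i$ is well-defined; this is ensured by the logarithmic specification \eqref{eq:5a}, which forces $\alpha_i+\bbeta_i^\prime \bx_i>0$ at the optimum, so the argument goes through without any additional assumption beyond what \eqref{eq:eq4} already demands.
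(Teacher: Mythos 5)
Your proposal is correct and follows essentially the same route as the paper's own proof: both combine the stationarity condition \eqref{eq:eq5} ($\mu_i=e_i$) with the $\beta_{ik}$ first-order condition \eqref{eq:eq17}, sum over $i$, and cancel the two double sums by swapping the dummy indices $i\leftrightarrow h$. Your observation that Lemma~\ref{prop2} is not actually needed is accurate (the paper's appendix proof does not use it either), and your pointwise substitution $\mu_i=e_i$ is, if anything, a slightly cleaner rendering of the same argument than the paper's ``sum both equations and combine'' phrasing.
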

\begin{proof}
    See Appendix~\ref{sec:proofs}. 
\end{proof}

The corresponding population orthogonality condition is 
\begin{equation*}
    \E\left[\varepsilon_i\dfrac{\bx}{f(\bx)} \right]=\mathbf{0}.
\end{equation*}

Therefore, it is not necessary to assume that $\bx$ are uncorrelated with the random error term $\varepsilon$, but rather, we need to assume that ratios $\bx/f(\bx)$ do not correlate with $\varepsilon$. This result can guide one with the choice of good instruments and help to circumvent certain specific types of endogeneity problems (e.g., the simultaneity problem by \citet{Marschak1944} stressed by \citet{Olley1996} among others). For instance, if low productivity firms need to use more inputs $\bx$, but $f(\bx)$ increases by the same proportion, then this would not cause the potential simultaneity problem.

\subsection{Convex regression subject to concavity and monotonicity}\label{orth2}

Suppose the regression function $f$ is concave and monotonic increasing (i.e., excluding constraints \eqref{eq:3d} and \eqref{eq:5d} from additive model \eqref{eq:eq3} and multiplicative model \eqref{eq:eq4}, respectively). Convex regression implements monotonicity by imposing a sign constraint for the slope coefficients $\bbeta_i$. 

In the case of the additive model subject to concavity and monotonicity, the Lagrangian function $\Ls$ becomes 
\begin{alignat*}{2}
    \Ls(\blambda,\balpha,\bbeta,\bmu,{\boldeta})=\dfrac{1}{2}&\sum\limits_{i=1}^n\varepsilon _i^2+\sum\limits_{i=1}^n\mu_i(y_i-\alpha_i-\bbeta_i^\prime\bx_i-\varepsilon_i)  \\
    &+\sum\limits_{i=1}^n{\sum\limits_{h=1}^n{\lambda_{ih}\left[\alpha _i-\alpha _h+(\bbeta_i-\bbeta_h)^\prime\bx_i\right]}}-\sum\limits_{i=1}^n \boldeta_i^\prime\bbeta_i,
\end{alignat*}
where ${\boldeta}_i \ge 0$ are shadow prices of the monotonicity constraints.
 
Differentiating, we obtain the first-order optimality conditions \eqref{eq:eq5}, \eqref{eq:eq6}, and the following
\begin{equation}
    \dfrac{\partial \Ls}{\partial \beta_{ik}}=-\mu_i x_{ik}+\sum\limits_{h=1}^n\lambda_{ih} x_{ik} -\sum\limits_{h=1}^n \lambda_{hi} x_{hk}-\eta_{ik}=0.
    \label{eq:eq13}
\end{equation}

Similarly, the first-order conditions for the multiplicative model subject to concavity and monotonicity are summarized as \eqref{eq:eq5}, \eqref{eq:eq16}, and the following 
\begin{align}
    \dfrac{\partial \Ls}{\partial \beta_{ik}}=-\mu_i\dfrac{x_{ik}}{\hat{y}_i}+\sum\limits_{h=1}^n \lambda_{ih} x_{ik}-\sum\limits_{h=1}^n \lambda_{hi}x_{hk} - \eta_{ik} = 0.
    \label{eq:eq22} 
\end{align}

In this case, conditions \eqref{eq:eq5} and \eqref{eq:eq22} imply that $\sum\limits_{i=1}^n \dfrac{e_i}{\hat{y}_i}=0$, thus Lemma \ref{prop2} is valid. 

In both additive and multiplicative models, if at least for one observation the monotonicity constraint is active (i.e., $\mu_{ij}>0$), then the sample orthogonality condition for convex regression must fail (Theorem~\ref{theo3}). 
\begin{theorem}
    The optimal solution to the convex regression problem subject to concavity and monotonicity satisfies the following inequalities, \\
    \hspace*{2cm} i) Additive model: $\sum\limits_{i=1}^n e_i x_{ik}<0 \,\ \forall k $; \\
    \hspace*{2cm} ii) Multiplicative model: $\sum\limits_{i=1}^n{e_i \dfrac{x_{ik}}{\hat{y}_i}}<0 \,\ \forall k.$
    \label{theo3}
\end{theorem}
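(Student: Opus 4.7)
The plan is to mimic the Lagrangian-based arguments used to establish Theorem~\ref{theo1} and Theorem~\ref{theo2}, but now account for the additional nonnegative Lagrange multipliers $\eta_{ik}$ associated with the monotonicity constraints $\beta_{ik}\ge 0$. The crucial change is that the first-order condition in $\beta_{ik}$ carries the extra term $-\eta_{ik}$, and this is precisely what breaks the orthogonality equality into a one-sided strict inequality in the claimed direction.

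For the additive model, I would begin from the first-order conditions \eqref{eq:eq5} and \eqref{eq:eq13}. Condition \eqref{eq:eq5} yields $\mu_i=e_i$, which I substitute into \eqref{eq:eq13} and then sum over $i$ for fixed $k$. The resulting identity reads
\[
\sum_{i=1}^n e_i x_{ik} \;=\; \sum_{i=1}^n\sum_{h=1}^n \lambda_{ih}x_{ik} \;-\; \sum_{i=1}^n\sum_{h=1}^n \lambda_{hi}x_{hk} \;-\; \sum_{i=1}^n \eta_{ik}.
\]
The key step is to observe that the two double sums over the concavity multipliers cancel: a relabeling of the dummy indices $i\leftrightarrow h$ in the second sum maps it onto the first, because in $\lambda_{hi}x_{hk}$ both the multiplier's first subscript and the weight $x_{hk}$ are indexed by the same summation variable $h$, which plays the role of $i$ in the first sum. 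This leaves
\[
\sum_{i=1}^n e_i x_{ik} \;=\; -\sum_{i=1}^n \eta_{ik}.
\]
By dual feasibility $\eta_{ik}\ge 0$, so the sum on the right is nonpositive, and by the hypothesis flagged in the paragraph preceding the theorem that at least one monotonicity constraint is binding with strictly positive shadow price, the inequality is strict.

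For the multiplicative model, the argument is formally identical once \eqref{eq:eq22} is used in place of \eqref{eq:eq13}. Substituting $\mu_i=e_i$ into \eqref{eq:eq22} and summing over $i$ for fixed $k$, the same relabeling cancellation in the $\lambda$-double-sums yields
\[
\sum_{i=1}^n e_i\,\dfrac{x_{ik}}{\hat y_i} \;=\; -\sum_{i=1}^n \eta_{ik} \;<\;0,
\]
where again the strict sign comes from nonnegativity of $\eta_{ik}$ together with the active-constraint hypothesis.

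The main obstacle, and essentially the only step deserving real care, is making the cancellation of the two double sums fully rigorous: one must notice the asymmetry between $\lambda_{ih}x_{ik}$ (where the weight inherits the \emph{first} index of $\lambda$) and $\lambda_{hi}x_{hk}$ (where the weight inherits the \emph{first} index of $\lambda$ as well, namely $h$) before swapping dummies, otherwise the cancellation is obscured. Once this accounting is made explicit, the strict direction of both inequalities follows from the sign of the monotonicity multipliers, and the proof reduces to a short bookkeeping exercise on the KKT system parallel to the proofs of Theorems~\ref{theo1} and~\ref{theo2}.
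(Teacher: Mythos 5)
Your argument is correct and is essentially the paper's own proof: sum the first-order condition \eqref{eq:eq13} (resp.\ \eqref{eq:eq22}) over $i$, cancel the two $\lambda$-double-sums by the same index relabeling used for Theorem~\ref{theo1} (resp.\ Theorem~\ref{theo2}), and read off $\sum_{i}e_i x_{ik}=-\sum_i\eta_{ik}$, with strictness coming from dual feasibility plus the binding-constraint hypothesis stated before the theorem. If anything, your explicit remark that strictness needs a \emph{strictly positive} multiplier (not merely an active constraint) is slightly more careful than the paper's one-line version.
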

\begin{proof}
    See Appendix~\ref{sec:proofs}. 
\end{proof}

The corresponding population orthogonality conditions for convex regression are not necessarily held. Specifically, for the additive model subject to concavity and monotonicity, we have 
$$\E[\varepsilon_i\bx_{ik}]\le0 \,\ \forall k, $$
and for the multiplicative model subject to concavity and monotonicity, there is
$$\E\left[\varepsilon_i\dfrac{\bx}{f(\bx)}\right]\le \mathbf{0}.$$

In fact, the sample orthogonality condition holds if and only if the monotonicity constraint is redundant for all observations. However, it is unlikely in practice that the monotonicity constraints do not bind for any single observation. Even if the true function is monotonic, there are usually some violations due to the random error $\varepsilon$. When monotonicity constraints are binding, the residuals negatively correlate with the explanatory variables. This might suggest that convex regression with monotonicity constraints can accommodate specific types of endogeneity. However, the sum of shadow prices $\sum\limits_{i=1}^n\eta_{ik}$ does not seem to have any obvious econometric meaning, and it is difficult to suggest the exact population orthogonality condition. %We thus recommend practitioners that the monotonicity constraints should not be imposed whenever it is important to ensure that the orthogonality conditions hold, for example, when using instruments. 

If the sample orthogonality does not hold, then what does it imply for the assumptions regarding the true $\varepsilon$? Do we implicitly assume that $\text{Cov}(x, \varepsilon) < 0$? Is that enough? Further, what happens if the usual population orthogonality $\text{Cov}(x, \varepsilon) = 0$ does hold, but one imposes monotonicity? Does the violation of the sample orthogonality cause bias in this case? If yes, what is the direction of bias? If the sample orthogonality is violated, would using IVs to address endogeneity improve the performance of the convex regression estimator or rather make things worse? In Section \ref{sec:mc}, we shed new light on these questions through the following Monte Carlo simulations and an empirical illustration.  

\subsection{Convex regression subject to concavity, monotonicity, and linear homogeneity}

Let us further assume the regression function $f$ to be concave, monotonic increasing, and linearly homogeneous in both additive and multiplicative models (i.e., problems \eqref{eq:eq3} and \eqref{eq:eq4}). 

While the linear homogeneity is considered in both convex regression problems, the sample orthogonality conditions will not be altered (see Theorem~\ref{theo4}).
\begin{theorem}
        The linear homogeneity does not affect the sample orthogonality condition of convex regression.
    \label{theo4}
\end{theorem}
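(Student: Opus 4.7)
The plan is to show that imposing linear homogeneity, which operationally takes the form of the equality constraint $\alpha_i = 0$ in \eqref{eq:3d} and \eqref{eq:5d}, modifies the Lagrangian only through a term linear in $\alpha_i$ and therefore leaves the first-order conditions with respect to $\varepsilon_i$ and $\beta_{ik}$ untouched. Since the sample orthogonality results in Theorems~\ref{theo1}--\ref{theo3} are derived exclusively from those two sets of first-order conditions, the conclusion will follow by direct inspection.

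First, I would append the additional multiplier term $\sum_{i=1}^n \nu_i \alpha_i$ (with $\nu_i \in \real$, since \eqref{eq:3d} and \eqref{eq:5d} are equality constraints) to the Lagrangian used in Sections~\ref{orth1} and \ref{orth2}. For the additive model with concavity, monotonicity, and linear homogeneity, the augmented Lagrangian becomes
\begin{align*}
\Ls(\blambda,\balpha,\bbeta,\bmu,\boldeta,\bnu) = \;& \tfrac{1}{2}\sum_{i=1}^n \varepsilon_i^2 + \sum_{i=1}^n \mu_i(y_i - \alpha_i - \bbeta_i^\prime \bx_i - \varepsilon_i) \\
& + \sum_{i=1}^n \sum_{h=1}^n \lambda_{ih}\bigl[\alpha_i - \alpha_h + (\bbeta_i - \bbeta_h)^\prime \bx_i\bigr] - \sum_{i=1}^n \boldeta_i^\prime \bbeta_i + \sum_{i=1}^n \nu_i \alpha_i,
\end{align*}
and an analogous modification applies to the multiplicative case. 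The second step is simply to differentiate: the derivatives $\partial \Ls / \partial \varepsilon_i$ and $\partial \Ls / \partial \beta_{ik}$ are identical to \eqref{eq:eq5} and \eqref{eq:eq13} (respectively \eqref{eq:eq5} and \eqref{eq:eq22} in the multiplicative case), because $\nu_i \alpha_i$ contains neither $\varepsilon_i$ nor $\beta_{ik}$. Only the stationarity condition for $\alpha_i$ picks up the extra term $+\nu_i$.

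The third step is to replay the arguments behind Theorems~\ref{theo1}, \ref{theo2}, and \ref{theo3}. In each of those proofs the sample orthogonality is obtained by summing the $\beta_{ik}$ stationarity condition over $i$, using $e_i = \mu_i$ from \eqref{eq:eq5}, and observing that the double sums involving $\lambda_{ih}$ cancel by swapping summation indices. None of these manipulations invoke the stationarity condition for $\alpha_i$, so the additional multiplier $\nu_i$ never enters the computation. Hence the additive orthogonality identity $\sum_i e_i x_{ik} = -\sum_i \eta_{ik}$ and its multiplicative counterpart carry over unchanged, as do Theorems~\ref{theo1} and \ref{theo2} when $\boldeta \equiv \mathbf{0}$.

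There is essentially no serious obstacle here; the only subtlety worth flagging is that linear homogeneity is often imposed by substitution (eliminating $\alpha_i$ from the problem outright) rather than by an explicit multiplier. To keep the argument clean, I would explicitly note that the substitution approach is equivalent to retaining $\alpha_i$ and adding the equality-constraint multiplier $\nu_i$, and that the elimination of $\alpha_i$ removes its stationarity condition from the system without affecting the stationarity conditions in $\varepsilon_i$ and $\beta_{ik}$ on which the orthogonality arguments depend. This confirms that the sample orthogonality conditions of Theorems~\ref{theo1}--\ref{theo3} hold verbatim under the additional linear homogeneity restriction.
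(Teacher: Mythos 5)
Your proof is correct and rests on the same key observation as the paper's: the sample orthogonality identities are derived solely from the stationarity conditions in $\varepsilon_i$ and $\beta_{ik}$, which the homogeneity restriction $\alpha_i=0$ leaves untouched. The paper imposes the restriction by eliminating $\alpha_i$ from the problem rather than via an explicit equality multiplier $\nu_i$, but as you note yourself these are equivalent, so the two arguments are essentially identical.
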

\begin{proof}
    See Appendix~\ref{sec:proofs}. 
\end{proof}

In the additive model, at the optimal solution to problem \eqref{eq:eq3}, the sum of residuals is not necessarily equal to zero, i.e., $\sum\limits_{i=1}^n e_i \neq 0$ and consequently $\sum\limits_{i=1}^n y_i \neq \sum\limits_{i=1}^n \hat{y}_i$. Unsurprisingly, this is analogous to the OLS regression through the origin (without an intercept), in which the sum of residuals is not usually zero \citep{Eisenhauer2003}. 

Similarly, the sum of residuals is not necessarily zero in the multiplicative model \eqref{eq:eq4}, but the orthogonality condition can be held if input variables are scaled by the fitted values (i.e., $\sum\limits_{i=1}^ne_i\dfrac{x_{ik}}{\hat{y}_i}=0 \,\ \forall k$) in the case of multiplicative model subject linear homogeneity and concave.

The validity of the orthogonality condition is mainly due to the first-order condition of the slope coefficients $\bbeta$, and imposing a sign constraint on the intercept coefficient $\alpha$ does not violate the orthogonality condition. Therefore, imposing a sign constraint on the intercept parameter $\alpha$ does not violate the orthogonality condition.

%-----------------%
%
%-----------------%

\section{Convex regression with instrumental variables and a control function}\label{sec:cf}

Suppose that in the nonparametric additive model~\eqref{eq:eq1}, one of the explanatory variables, $x_1$, is endogenous, meaning either
\begin{equation*}
    \E(\varepsilon | x_1) \neq 0 \text{ or } \E(y | \bx) \neq f(\bx).
\end{equation*}
This implies that $x_1$ is correlated with the random error term $\varepsilon$, or the conditional expectation of $y$ given $\bx$ deviates from the true underlying function $f(\bx)$. That is, the nonparametric estimator $\hat{f}(\bx)$ is inconsistent, $ \hat{f}(\bx) \overset{p}{\centernot\longrightarrow} f(\bx)$.

The classic remedy to such endogeneity bias is to utilize instrumental variables that are highly correlated with the regressors but uncorrelated with the disturbance term. In the present context of convex regression, good instruments $\bz$ should satisfy the relevant population orthogonality conditions identified in the previous section. If such instruments are available, the classic 2SLS approach can be easily adapted to the convex regression as follows. In the first stage, we regress the endogenous regressor $x_e$ on instruments $\bz$ and other exogenous variables using linear regression. In the second stage, we replace the endogenous $x_e$ by the predicted $\hat{x}_e$ obtained in the first stage.

Inspired by \citet{Yatchew2003} and \citet{Chetverikov2017}, we also consider an alternative hybrid IV control function approach to alleviate the endogeneity bias in the convex regression. In this approach, the instruments are introduced by means of a control function and the two-stage estimation strategy is then applied to obtain consistent estimates in the presence of endogeneity. 

Suppose there exist instruments $\bz$ such that
\begin{equation}
    x_{1i} = \bomega \bz_i + \zeta_i,
    \label{eq:eq4.1}
\end{equation}
where $\E(\zeta | \bz) = 0$ and $\E(\varepsilon | \bz) = 0$. For instance, when estimating the regression function with measurement error in inputs, we can instrument the capital stock (i.e., endogenous variable $x_1$) with (lagged) investment (i.e., exogenous variable $z$; an alternative measure of capital) \citep{Collard2016}. 

We notice that in earlier studies, the control function is usually a single variable that, when added to a regression, can reduce the endogeneity of a policy variable, but in modern econometrics, the valid control function generally relies on the availability of one or more instruments \citep{Wooldridge2015}. In this sense, the control function approach inherits the spirit of the IV approach but can overcome the ill-posed inverse problem in nonparametric regression. We thus practically consider an instrument and other exogenous variables as the valid control function, $\bz$, in the empirical illustration. 

Suppose further that $\E(\varepsilon |x_1, \zeta) = \lambda \zeta$ and hence we have $\varepsilon = \lambda \zeta + \nu$. We need to regress \eqref{eq:eq4.1} using OLS to obtain residuals $\zeta$ and then rewrite the nonparametric additive model~\eqref{eq:eq1} to the following semi-nonparametric model\footnote{
    The same control function approach can also be applied to the nonparametric multiplicative model~\eqref{eq:eq4} to address the endogeneity problem.
}
\begin{equation}
   y_i = f(\bx_i) + \lambda \hat{\zeta}_i + \nu_i,
    \label{eq:eq4.2}
\end{equation}
where $\E(\nu | \bx, \hat{\zeta}) = 0$ and $\text{Var}(\nu | \bx, \hat{\zeta})=\sigma^2(\bx, \hat{\zeta})$. Such a two-stage estimation to correct endogeneity is also sometimes called a residual inclusion method in the literature \citep[see, e.g.,][]{Terza2008, Amsler2016}. Compared to the 2SLS approach, the hybrid IV control function approach can not only capture the impact of measurement error in the input but also control the unobserved simultaneous bias.  

To estimate $\lambda$ and $f$ in~\eqref{eq:eq4.2}, we can solve the following semi-nonparametric convex regression model
\begin{alignat}{2}
    \underset{\alpha, \bbeta, \lambda, \nu}\min \, &\dfrac{1}{2}\sum_{i=1}^n\nu_i^2 &{\quad}& \label{eq:eq4.3}\\
    \mbox{\textit{s.t.}}\quad 
    &  y_i= \alpha_i+\bbeta^\prime_i\bx_i + \lambda \hat{\zeta}_i + \nu_i &{\quad}& \forall i  \notag \\
    &  \alpha_i + \bbeta_i^\prime\bx_i \le \alpha_h + \bbeta_h^\prime\bx_i  &{\quad}&  \forall i, h \notag 
\end{alignat}
where $\hat{\zeta}_i$ is the estimated residuals by the OLS regression~\eqref{eq:eq4.1}. We consider the concave case of the regression function in the additive semi-nonparametric convex regression model \eqref{eq:eq4.3}. The monotonicity constraints are eliminated to maintain the sample orthogonality condition, as suggested by Theorems \ref{theo1} and \ref{theo3}. Similar to convex regression~\eqref{eq:eq3}, the first set of constraints in~\eqref{eq:eq4.3} is the reformulation of semi-nonparametric regression \eqref{eq:eq4.2}, and the second set of constraints guarantees the regression function $f$ to be concave. 

The estimated $\hat{\alpha}_i$ and $\hat{\bbeta}_i$ in~\eqref{eq:eq4.3} vary for each observation, but $\hat{\lambda}$ is common to all observations. Further, the semi-nonparametric convex regression model~\eqref{eq:eq4.3} is a restricted special case of convex regression \eqref{eq:eq3}, in which $\hat{\zeta} \subseteq \bx$ with $\lambda_i = \lambda_h$. The estimated $\hat{\lambda}$ in~\eqref{eq:eq4.3} has been proved consistent, unbiased, asymptotically efficient, and converges at the rate of $O(n^{-1/2})$ \citep{Johnson2011, Johnson2012a}. 

Based on the OLS regression and convex regression, the consistency of the nonparametric regression function $f$ in the proposed hybrid IV control function approach can be stated as
\begin{theorem}
  If the instruments $\bz$ satisfies the conditions $\E(\zeta | \bz) = 0$ and $\E(\varepsilon | \bz) = 0$, then the semi-nonparametric convex regression estimator $\hat{f}(\bx, \hat{\zeta})$ subject to concavity~\eqref{eq:eq4.3} is consistent, that is,
   \begin{equation*}
       \hat{f}(\bx, \hat{\zeta}) \overset{p}{\rightarrow} f(\bx, \hat{\zeta}).
   \end{equation*}
    \label{theo5}
\end{theorem}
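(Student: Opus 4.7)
The plan is to decompose the argument into the first-stage OLS step, the second-stage concave QP step, and a bridging generated-regressor step. First I invoke the classical OLS consistency for the first-stage regression~\eqref{eq:eq4.1}: under $\E(\zeta\mid\bz)=0$ and standard rank and moment conditions on $\bz$, the estimator satisfies $\hat{\bomega}\overset{p}{\to}\bomega$ at the $n^{-1/2}$ rate, so the first-stage residuals satisfy $\hat{\zeta}_i-\zeta_i=(\bomega-\hat{\bomega})\bz_i=O_p(n^{-1/2})$ uniformly in $i$ on a compact support of $\bz$. This is a direct application of standard asymptotic theory and requires no new machinery.

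Next I recast the second stage as a concave regression with augmented regressors. Define $g(\bx,\zeta)\equiv f(\bx)+\lambda\zeta$, which is concave jointly in $(\bx,\zeta)$ because $f$ is concave in $\bx$ and $\lambda\zeta$ is linear. By the control-function assumption $\E(\nu\mid\bx,\zeta)=0$, so the transformed model $y_i=g(\bx_i,\zeta_i)+\nu_i$ obeys the exogeneity required for consistency. Because the monotonicity constraint has been deliberately omitted from \eqref{eq:eq4.3}, Theorem~\ref{theo1} guarantees that the sample orthogonality condition holds for the concavity-constrained QP; the infeasible estimator based on the true $\zeta_i$ therefore falls within the classical consistency framework of \citet{Seijo2011} and \citet{Lim2012}, delivering $\hat{g}(\bx,\zeta)\overset{p}{\to}g(\bx,\zeta)$.

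The remaining task is to transfer this consistency from the infeasible estimator (using $\zeta$) to the feasible one (using $\hat{\zeta}$). Because $\hat{\zeta}_i$ enters \eqref{eq:eq4.3} linearly through the scalar coefficient $\lambda$, its perturbation appears as an additive shift $\lambda(\hat{\zeta}_i-\zeta_i)$ on the right-hand side of the regression equation. A continuity-of-argmin argument, combining the $O_p(n^{-1/2})$ uniform convergence from the first stage with boundedness of $\hat{\lambda}$ (which follows from \citet{Johnson2011, Johnson2012a} as cited in the paper), then shows that the feasible and infeasible optimizers differ by $o_p(1)$. The continuous mapping theorem closes the argument and yields $\hat{f}(\bx,\hat{\zeta})\overset{p}{\to}f(\bx,\zeta)$.

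The hardest step will be the generated-regressor transfer. The QP solution map of convex regression is piecewise linear and non-smooth in its data inputs, so a direct delta-method expansion around the true $\zeta$ is awkward. My plan is to sidestep differentiability and argue through stochastic equicontinuity or epi-convergence of the empirical least-squares criterion: because the criterion changes by $O_p(n^{-1/2})$ uniformly over the concavity-constrained function class, its minimizer moves by an $o_p(1)$ amount in the $L_2$ metric on a compact regressor support, which, combined with uniqueness of the population minimizer under exogeneity, closes the argument.
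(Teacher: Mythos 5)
Your proposal follows essentially the same route as the paper's own proof: establish first-stage OLS consistency so that $\hat{\zeta}_i$ consistently estimates $\zeta_i$, then obtain second-stage consistency of the concavity-constrained least-squares estimator by invoking the results of \citet{Seijo2011} and \citet{Lim2012}, adapted to accommodate the generated regressor. You are in fact more explicit than the paper about the one genuinely delicate step---transferring consistency from the infeasible estimator (true $\zeta$) to the feasible one (estimated $\hat{\zeta}$) via an epi-convergence/continuity-of-argmin argument---which the paper disposes of with the phrase ``adapting and extending,'' so your sketch is, if anything, slightly more complete.
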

\vspace{-3em}
\begin{proof}
    See Appendix~\ref{sec:proofs}.  
\end{proof}

Alternatively, we can utilize the double residual method to estimate $f$ and $\lambda$ separately \citep{Robinson1988}, or even resort to the first-order differencing approach in the univariate case \citep{Yatchew2003}. In addition to the parametric OLS regression, in the first stage~\eqref{eq:eq4.1}, we can also use nonparametric methods such as kernel regression and local polynomial method to estimate $\zeta$ consistently. By contrast to our linear control function, \citet{Rodseth2025} propose a fully nonparametric approach to model the control function. Nevertheless, further comparison of linear and nonparametric modeling of control functions is warranted for future investigations.

Another appealing feature of the proposed hybrid IV control function approach is that we can simply use the $t$- or $F$-test to examine whether $x_1$ is exogenous after the semi-nonparametric convex regression \eqref{eq:eq4.3} is applied (i.e., the null hypothesis is $H_0: \hat{\lambda} = 0$). That is, if $\hat{\lambda}$ significantly departs from zero, then the explanatory variable $x_1$ is endogenous.

%-----------------%
%
%-----------------%

\section{Monte Carlo study}\label{sec:mc}

In this section, we perform a Monte Carlo study to compare the finite sample performance of convex regression and OLS under endogeneity. We have three objectives in designing the following simulations. First, we investigate how large the endogeneity bias of OLS is compared to the functional form-related specification error. Second, we explore whether the 2SLS and IV control function approaches can help convex regression without monotonicity to mitigate the impact of endogeneity in various scenarios. Third, we examine whether the 2SLS and IV control function approaches can improve the performance of convex regression with monotonicity under endogeneity. 

We consider the following two data-generating processes (DGPs) to generate inputs $\bx$ and output $y$ (see, e.g., \citealp{Cordero2015, Rodseth2025}).
\begin{equation*}
\begin{aligned}
    & \text{DGP I:}  \quad  y_i = x_{1i}^{0.4} \cdot x_{ei}^{0.6} \cdot \exp(\varepsilon_i), \\
    & \text{DGP II:} \quad  y_i = x_{1i}^{0.3} \cdot x_{2i}^{0.35} \cdot x_{ei}^{0.35} \cdot \exp(\varepsilon_i),
\end{aligned}
\end{equation*}
where the exogenous $x_1$ and/or $x_2$ are independently drawn from a uniform distribution, $x_1/x_2 \sim U[5, 50]$, while the endogenous $x_e$ and random error $\varepsilon$ are generated according to the specific setting in each of the experiments below. 

In all experiments, we estimate the OLS regression using Python/scikit-learn and convex regression using Python/pyStoNED \citep{Dai2024} with the off-the-shelf solver KNITRO (13.2). Each scenario is duplicated 1000 times to compute the root mean square error (RMSE) and bias statistics for all estimators in terms of the production function estimation. All simulations are run on Finland's high-performance computing cluster Puhti with Xeon @2.2 GHz processors, 2 CPUs, and 20 GB of RAM per task.

\subsection{Experiment 1}

The endogeneity problem of OLS is well understood: if the basic assumption $\E(\varepsilon_i x_{ik})=0, \exists k$ fails, then the standard OLS estimator is biased and inconsistent. However, there is little simulation evidence on how large the endogeneity bias is compared to the specification error, for example, due to the mis-specified parametric functional form.\footnote{
    The true model would be in logs, so the misspecification is due to the fact that the OLS regression is in levels. In the first experiment, both OLS regressions---specified in logarithmic and level forms---are employed to assess whether they are subject to endogeneity.
} 
Therefore, we design the first experiment to examine the finite sample performance of OLS and convex regression estimators under endogeneity.

Similar to \cite{Mutter2013}, we generate the random error $\varepsilon$ as 
\begin{equation*}
    \varepsilon = \sigma_\varepsilon \left[\rho \times \text{std}(x_e) + (1-\rho^2)^{.5}W\right],
\end{equation*}
where the parameter $\rho \in [0, 1]$ is a prespecified correlation coefficient determining the degree of endogeneity between $x_e$ and $\varepsilon$, the random variable $W \sim N(0, 1)$, and $\text{std}(x_e)$ is the standardized endogenous variable $x_e$ with a mean of zero and a standard deviation of one. The endogenous $x_e$ is independently drawn from a uniform distribution, $x_e \sim U[5, 50]$.

In experiment 1, we consider 72 scenarios with different numbers of observations $n \in \{50, 100, 200, 400\}$, different magnitudes of endogeneity $\rho \in \{0, 0.45, 0.9\}$, different noise levels $\sigma_\varepsilon \in \{0.5, 1, 2\}$, and different numbers of inputs $k \in \{2, 3\}$. The standard OLS regression and multiplicative convex regression~\eqref{eq:eq4} are applied to the designed DGPs for estimating production functions. The RMSE statistics of the alternative estimators are reported in Tables~\ref{tab:tab1} (the bias statistics are available in Table~\ref{tab:a1} in the Appendix). 
\begin{table}[H]
  \centering
  \caption{RMSE comparison between OLS and convex regression with $n=400$ (standard deviations in parentheses).}
  \footnotesize
    \setlength{\tabcolsep}{4.5pt}
    \begin{tabular}{llccccccccc}
    \toprule
    \multicolumn{1}{l}{\multirow{3}[6]{*}{\begin{tabular}{l} Endogeneity \\ $\rho$ \end{tabular}}}
 & \multicolumn{1}{l}{\multirow{3}[6]{*}{\begin{tabular}{l} Noise \\ $\sigma_\varepsilon$ \end{tabular}}} & \multicolumn{4}{c}{DGP I: $k=2$} &       & \multicolumn{4}{c}{DGP II: $k=3$} \\
\cmidrule{3-6}\cmidrule{8-11}          &       & \multicolumn{2}{c}{OLS} & \multicolumn{2}{c}{CR} &       & \multicolumn{2}{c}{OLS} & \multicolumn{2}{c}{CR} \\
\cmidrule{3-6}\cmidrule{8-11}          &       & \multicolumn{1}{c}{logs} & \multicolumn{1}{c}{levels} & \multicolumn{1}{c}{conc+mon} & \multicolumn{1}{c}{conc} &       & \multicolumn{1}{c}{logs} & \multicolumn{1}{c}{levels} & \multicolumn{1}{c}{conc+mon} & \multicolumn{1}{c}{conc} \\
    \midrule
    0     & 0.5   & 0.98  & 4.31  & 1.69  & 2.33  &       & 1.13  & 4.23  & 2.04  & 3.35 \\
          &       & (0.49)  & (0.86)  & (0.46)  & (0.52)  &       & (0.44)  & (0.77)  & (0.34)  & (0.44) \\
          & 1     & 1.96  & 18.04 & 2.81  & 3.93  &       & 2.26  & 17.74 & 3.18  & 5.57 \\
          &       & (1.01)  & (3.66)  & (1.17)  & (0.93)  &       & (0.91)  & (3.35)  & (0.66)  & (0.84) \\
          & 2     & 3.97  & 181.04 & 4.52  & 6.41  &       & 4.24  & 169.92 & 4.52  & 8.22 \\
          &       & (2.14)  & (87.36) & (1.64)  & (1.77)  &       & (2.27)  & (101.33) & (1.95)  & (3.05) \\
    0.9   & 0.5   & 12.06 & 15.59 & 7.62  & 7.64  &       & 11.29 & 14.34 & 8.90  & 9.02 \\
          &       & (1.92)  & (2.48)  & (1.23)  & (1.23)  &       & (0.56)  & (0.71)  & (0.43)  & (0.43) \\
          & 1     & 29.55 & 44.93 & 11.36 & 11.51 &       & 27.59 & 41.49 & 14.45 & 15.11 \\
          &       & (6.88)  & (10.56) & (2.69)  & (2.72)  &       & (1.83)  & (3.03)  & (0.93)  & (0.94) \\
          & 2     & 106.14 & 265.20 & 14.50 & 15.23 &       & 78.68 & 195.16 & 15.46 & 17.86 \\
          &       & (20.82) & (58.91) & (2.93)  & (3.05)  &       & (37.32) & (95.22) & (7.32)  & (8.45) \\
    \bottomrule
    \\[-0.8em]
    \multicolumn{11}{l}{\footnotesize \textit{Notes}: logs: OLS regressions in logarithmic form; levels: OLS regressions in level form;} \\
    \multicolumn{11}{l}{\footnotesize \hspace*{1cm} conc+mono: the multiplicative model subject to monotonicity and concavity;}\\
    \multicolumn{11}{l}{\footnotesize \hspace*{1cm} conc: the multiplicative model subject to concavity.}\\
    \end{tabular}%
  \label{tab:tab1}%
  \vspace{-1em}
\end{table}%

The correctly specified OLS regression (logs) yields the smallest RMSE in the case of no endogeneity in all considered scenarios and hence outperforms the multiplicative convex regression, which conforms with the findings in, e.g., \citet{Kuosmanen2008} and \citet{Tsionas2022b}. However, as the degree of endogeneity increases (i.e., $\rho \rightarrow 1$), the performance of the OLS regression (logs) 
 quickly deteriorates (see Table~\ref{tab:tab2} for $\rho=0.45$), and the RMSE of the OLS regression (logs) is notably higher than that of convex regression, irrespective of monotonicity constraints, particularly when the error standard deviation ($\sigma_\varepsilon$) is high. Yet, the performance of convex regression is relatively robust to the changes in endogeneity level. This suggests that the impact of endogeneity is more pronounced for the OLS than the convex regression estimator. Note that the misspecified OLS regression (levels) exhibits the highest RMSE across all scenarios, implying that the biases due to model misspecification outweigh those stemming from endogeneity.

When comparing the difference in RMSE between OLS and convex regression, we observe that the gap narrows in the absence of endogeneity but widens largely when endogeneity is introduced to DGPs. This observation remains unchanged across all noise levels.
For instance, in the scenarios with $\sigma_\varepsilon=1$ and $k=2$, the absolute difference in RMSE between OLS (logs) and convex regression (conc) is 1.97, 4.43, and 18.04 with three given $\rho$, respectively (see Table~\ref{tab:tab2} for $\rho=0.45$). Furthermore, the RMSE results indicate that, even when alternative functional forms are considered (e.g., OLS in logs), OLS estimators still perform notably worse than convex regression methods, especially when $\rho$ is large. This suggests that, under high endogeneity, the bias from endogeneity in OLS may outweigh the bias arising from functional form misspecification.

Across all DGPs and under varying levels of noise and endogeneity, the convex regression model imposing both concavity and monotonicity constraints (conc+mon) systematically outperforms its counterpart that imposes only concavity (conc) in terms of RMSE. The performance gains from incorporating monotonicity are particularly pronounced when the variance of the random error term is large, highlighting the usefulness of monotonicity constraints. Even under conditions of severe endogeneity, the conc+mon specification yields marginally lower RMSEs, suggesting that the additional structural constraint enhances the estimator's robustness without introducing substantial bias or overfitting (cf., Table~\ref{tab:a1}).

Several additional insights emerge from Tables~\ref{tab:tab1} and~\ref{tab:a1}. First, the bias associated with OLS (logs) is always positive, as expected, due to the non-linear nature of the logarithmic transformation. Second, for both OLS and convex regression estimators, both RMSE and bias increase with the level of noise, indicating a deterioration in estimation accuracy as the signal-to-noise ratio worsens. Third, as the dimensionality of the input space increases, the RMSE of convex regression estimators tends to rise, supporting the findings from prior work such as \citet{Dai2023} and \citet{Dai2023c}, which highlight the curse of dimensionality of nonparametric estimators. Finally, in the presence of endogeneity (e.g., $\rho = 0.9$), the bias of convex regression estimators remains substantially lower than that of OLS, further reinforcing the result that convex regression exhibits greater robustness to endogeneity bias.
\begin{table}[H]
  \centering
  \caption{The effect of sample size with moderate endogeneity with $\rho=0.45$ (standard deviations in parentheses).}
  \footnotesize
    \setlength{\tabcolsep}{4.5pt}
    \begin{tabular}{lcccccccccc}
    \toprule
    \multicolumn{1}{l}{\multirow{3}[6]{*}{\begin{tabular}{l} Noise \\ $\sigma_\varepsilon$ \end{tabular}}} & \multicolumn{1}{l}{\multirow{3}[6]{*}{\begin{tabular}{l} Sample size \\ $n$ \end{tabular}}} & \multicolumn{4}{c}{DGP I: $k=2$} &       & \multicolumn{4}{c}{DGP II: $k=3$} \\
\cmidrule{3-6}\cmidrule{8-11}          &       & \multicolumn{2}{c}{OLS} & \multicolumn{2}{c}{CR} &       & \multicolumn{2}{c}{OLS} & \multicolumn{2}{c}{CR} \\
\cmidrule{3-6}\cmidrule{8-11}          &       & \multicolumn{1}{c}{logs} & \multicolumn{1}{c}{levels} & \multicolumn{1}{c}{conc+mon} & \multicolumn{1}{c}{conc} &       & \multicolumn{1}{c}{logs} & \multicolumn{1}{c}{levels} & \multicolumn{1}{c}{conc+mon} & \multicolumn{1}{c}{conc} \\
    \midrule
    0.5   & 50    & 6.19  & 9.85  & 5.22  & 5.90  &       & 6.19  & 9.18  & 6.19  & 7.94 \\
          &       & (2.35)  & (3.42)  & (1.78)  & (1.81)  &       & (1.97)  & (2.86)  & (1.64)  & (1.83) \\
          & 100   & 5.93  & 9.77  & 4.81  & 5.21  &       & 5.71  & 8.86  & 5.51  & 6.67 \\
          &       & (1.70)  & (2.43)  & (1.31)  & (1.31)  &       & (1.45)  & (2.19)  & (1.19)  & (1.24) \\
          & 200   & 5.78  & 9.73  & 4.61  & 4.81  &       & 5.52  & 8.71  & 5.14  & 5.85 \\
          &       & (1.18)  & (1.71)  & (0.93)  & (0.92)  &       & (1.03)  & (1.51)  & (0.86)  & (0.87) \\
          & 400   & 5.67  & 9.59  & 4.47  & 4.58  &       & 5.38  & 8.63  & 4.91  & 5.35 \\
          &       & (0.85)  & (1.19)  & (0.65)  & (0.63)  &       & (0.73)  & (1.04)  & (0.61)  & (0.60) \\
    1     & 50    & 13.98 & 33.78 & 9.33  & 10.79 &       & 13.84 & 31.00 & 11.12 & 15.42 \\
          &       & (6.31)  & (15.54) & (3.92)  & (4.10)  &       & (5.45)  & (13.32) & (3.97)  & (4.80) \\
          & 100   & 13.22 & 33.81 & 8.55  & 9.33  &       & 12.46 & 30.41 & 9.87  & 12.43 \\
          &       & (4.45)  & (11.16) & (2.80)  & (2.86)  &       & (3.90)  & (10.50) & (2.70)  & (3.01) \\
          & 200   & 12.79 & 33.59 & 8.17  & 8.52  &       & 11.95 & 29.64 & 9.25  & 10.71 \\
          &       & (2.98)  & (7.28)  & (1.92)  & (1.92)  &       & (2.69)  & (6.81)  & (1.86)  & (1.96) \\
          & 400   & 12.41 & 32.82 & 7.81  & 7.98  &       & 11.56 & 29.42 & 8.84  & 9.67 \\
          &       & (2.33)  & (5.56)  & (1.44)  & (1.42)  &       & (1.88)  & (4.57)  & (1.29)  & (1.31) \\
    2     & 50    & 38.65 & 306.93 & 16.01 & 19.47 &       & 39.10 & 286.56 & 20.15 & 32.12 \\
          &       & (23.79) & (464.69) & (9.39)  & (10.36) &       & (22.17) & (375.68) & (10.56) & (15.26) \\
          & 100   & 35.08 & 320.12 & 13.81 & 15.69 &       & 32.81 & 296.03 & 16.84 & 23.33 \\
          &       & (15.58) & (342.62) & (6.19)  & (6.67)  &       & (14.30) & (388.38) & (6.44)  & (8.28) \\
          & 200   & 32.93 & 303.21 & 12.73 & 13.64 &       & 30.28 & 269.21 & 15.21 & 18.72 \\
          &       & (9.98)  & (186.61) & (4.11)  & (4.25)  &       & (9.40)  & (218.54) & (4.21)  & (4.91) \\
          & 400   & 31.26 & 287.38 & 11.84 & 12.30 &       & 28.49 & 261.78 & 14.18 & 16.11 \\
          &       & (7.97)  & (130.03) & (3.09)  & (3.12)  &       & (6.74)  & (118.87) & (3.05)  & (3.32) \\
    \bottomrule
    \\[-0.8em]
    \multicolumn{11}{l}{\footnotesize \textit{Notes}: logs: OLS regressions in logarithmic form; levels: OLS regressions in level form;} \\
    \multicolumn{11}{l}{\footnotesize \hspace*{1cm} conc+mono: the multiplicative model subject to monotonicity and concavity;}\\
    \multicolumn{11}{l}{\footnotesize \hspace*{1cm} conc: the multiplicative model subject to concavity.}\\
    \end{tabular}%
  \label{tab:tab2}%
  \vspace{-1em}
\end{table}%

We further investigate the performance of both estimators for different sample sizes and for a wide range of variances of random error $\sigma_\varepsilon$ in Table~\ref{tab:tab2}. In the case of moderate endogeneity, we observe that the larger $n$ for either $k=2$ or $k=3$, the better performance in terms of RMSE and bias for all methods. The average standard deviations of RMSE clearly decrease as $n$ increases. Again, the results in Table~\ref{tab:tab2} demonstrate that as $\sigma_\varepsilon$ decreases, the performance of the estimator gets better. 

\subsection{Experiment 2}

As mentioned in Section~\ref{orth1}, convex regression subject to concavity can satisfy the sample orthogonality condition. A valid instrument is then expected to improve the performance of such a convex regression estimator in the presence of endogeneity. We thus design the second experiment to compare the finite sample performance of convex regression and convex regression with the 2SLS and IV control function approaches, where the monotonicity constraint in all estimators is relaxed.

We draw the endogenous $x_e$ from the following multiplicative model
$$x_{ei} = \tilde{\rho} x_{zi} \cdot \exp(\nu_i),$$ 
where $x_z$ is an instrument to be used in the control function and 2SLS approaches, and the parameter $\tilde{\rho}$ captures the strength of correlation between $x_z$ and $x_e$. For the individual $\varepsilon_i$ and $\nu_i$, we assume 
$$\left(\begin{array}{c} \varepsilon_i \\ \nu_i \end{array}\right) \sim N(0, \Sigma); \Sigma = \left(\begin{array}{cc} \sigma_\varepsilon^2  & \sigma_{\varepsilon\nu}\\ \sigma_{\varepsilon\nu} & \sigma_\nu^2 \end{array}\right).$$

Since $\text{Cov}(\varepsilon, \nu) \neq 0$, it follows that $\text{Cov}(x_e, \varepsilon) \neq 0$, confirming that $x_e$ is an endogenous variable. Given that $\text{Cov}(x_z, \nu) = 0$, $x_z$ can serve as a valid instrument  for $x_e$ and is randomly generated from a uniform distribution, $x_z \sim U[5, 50]$. We further set $\sigma_\varepsilon = \sigma_\nu = \sigma$ uniformly across all considered cases. In the simulations, we choose $n \in \{50, 100, 200, 400\}$, $\rho \in \{0, 0.45, 0.9\}$, $\tilde{\rho} \in \{0.45, 0.9\}$, $\sigma \in \{1, 2, 5\}$ and $k \in \{2, 3\}$. The 2SLS approach and the hybrid IV control function approach introduced in Section~\ref{sec:cf} are applied.

Fig.~\ref{fig:fig1} depicts the RMSE statistic of three different convex regression estimators with $\sigma=2$ and $k=3$. See also the scenarios with $\sigma=1$ and $k=3$ in Fig.~\ref{fig:figa1}. From the demonstrated figures, we observe that in the case of no endogeneity ($\rho=0$), convex regression generally performs best in three estimators, but the performance difference between convex regression and convex regression with IV control function is quite small, particularly in the large sample (e.g., $n=400$). Regarding the IV-based convex regression, its performance deteriorates significantly when the sample size exceeds 100.
\begin{figure}[H]
    \centering
    \includegraphics[width=\linewidth]{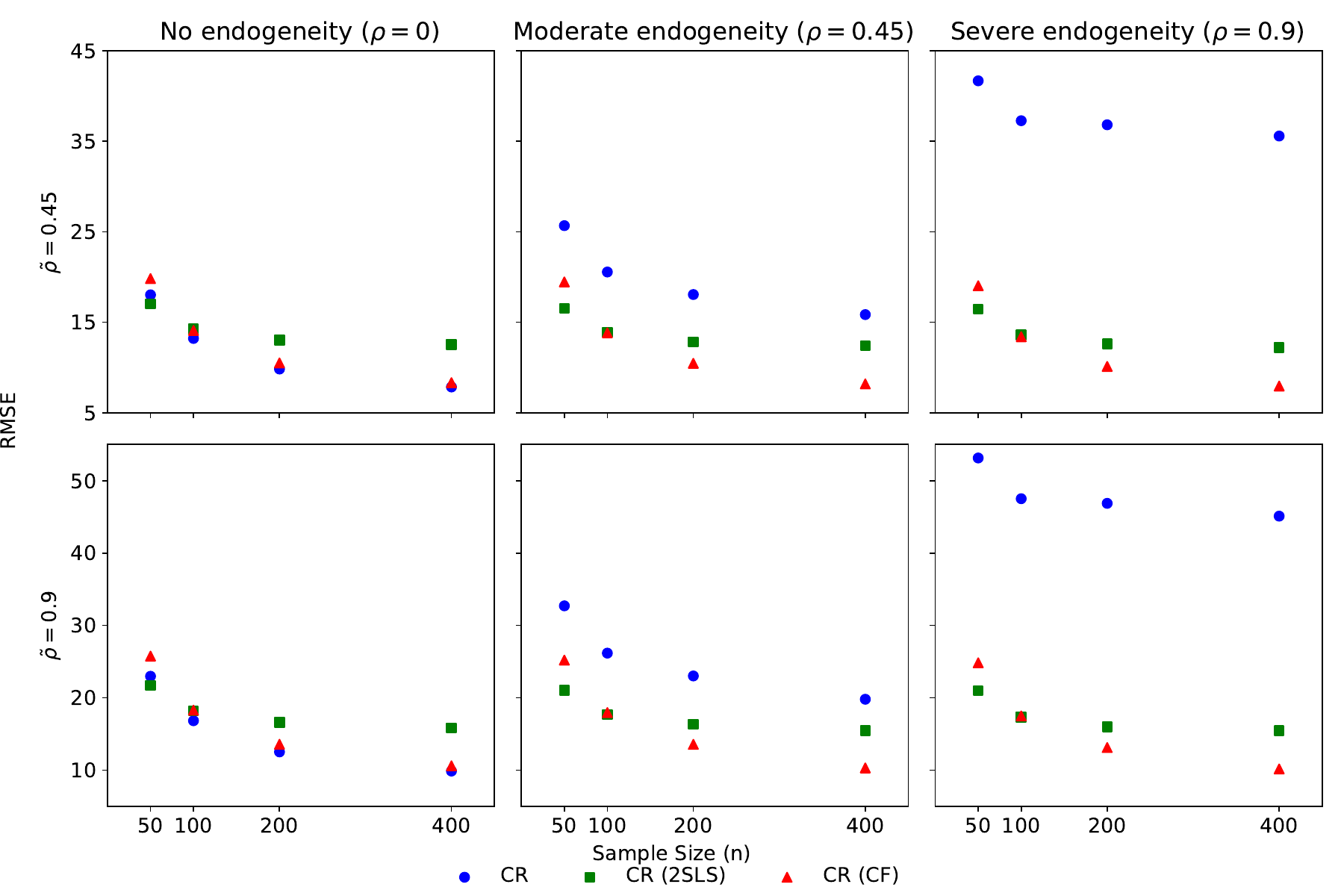}
    \caption{The RMSE statistic of convex regression estimators with $\sigma=2$ and $k=3$.}
    \label{fig:fig1}
\end{figure}
After introducing a correlation between the input and the random error (i.e., $\rho>0$) to the data space, the RMSE results of convex regression clearly indicate the presence of an endogeneity bias, similar to the findings in Tables~\ref{tab:tab1} and \ref{tab:tab2}. While convex regression subject to concavity can satisfy the sample orthogonality condition, it is also not immune to endogeneity bias. As shown in Fig.~\ref{fig:fig1}, the IV-based and IV control function-based convex regression estimators can both be used to address the potential endogeneity problem, but the latter is more effective. Specifically, the RMSEs of convex regression with IV and IV control function approach are less than those of convex regression in the cases of $\rho=0.45$ and $\rho=0.9$, and the RMSE of IV-based convex regression is also higher than that of IV control convex regression. Furthermore, the strong instrument does not appear to reduce RMSE values significantly in both IV-related approaches. 

We next evaluate the performance of estimators from the bias perspective. Overall, the bias results presented in Fig.~\ref{fig:fig2} align with the RMSE findings shown in Fig.~\ref{fig:fig1}. In scenarios where $n=200$ or $n=400$, the bias of the convex regression with the IV approach is negative, while the biases of both convex regression and convex regression with the IV control function approach are positive across all considered scenarios, as expected. However, bias should always be positive in the designed multiplicative error structure. Furthermore, for the large sample size, the bias of the IV control function approach is closer to zero compared to the 2SLS approach. Therefore, we would stress that the proposed hybrid IV control function approach is more suitable to address the endogeneity problem in the context of nonparametric convex regression. 
\begin{figure}[H]
    \centering
    \includegraphics[width=\linewidth]{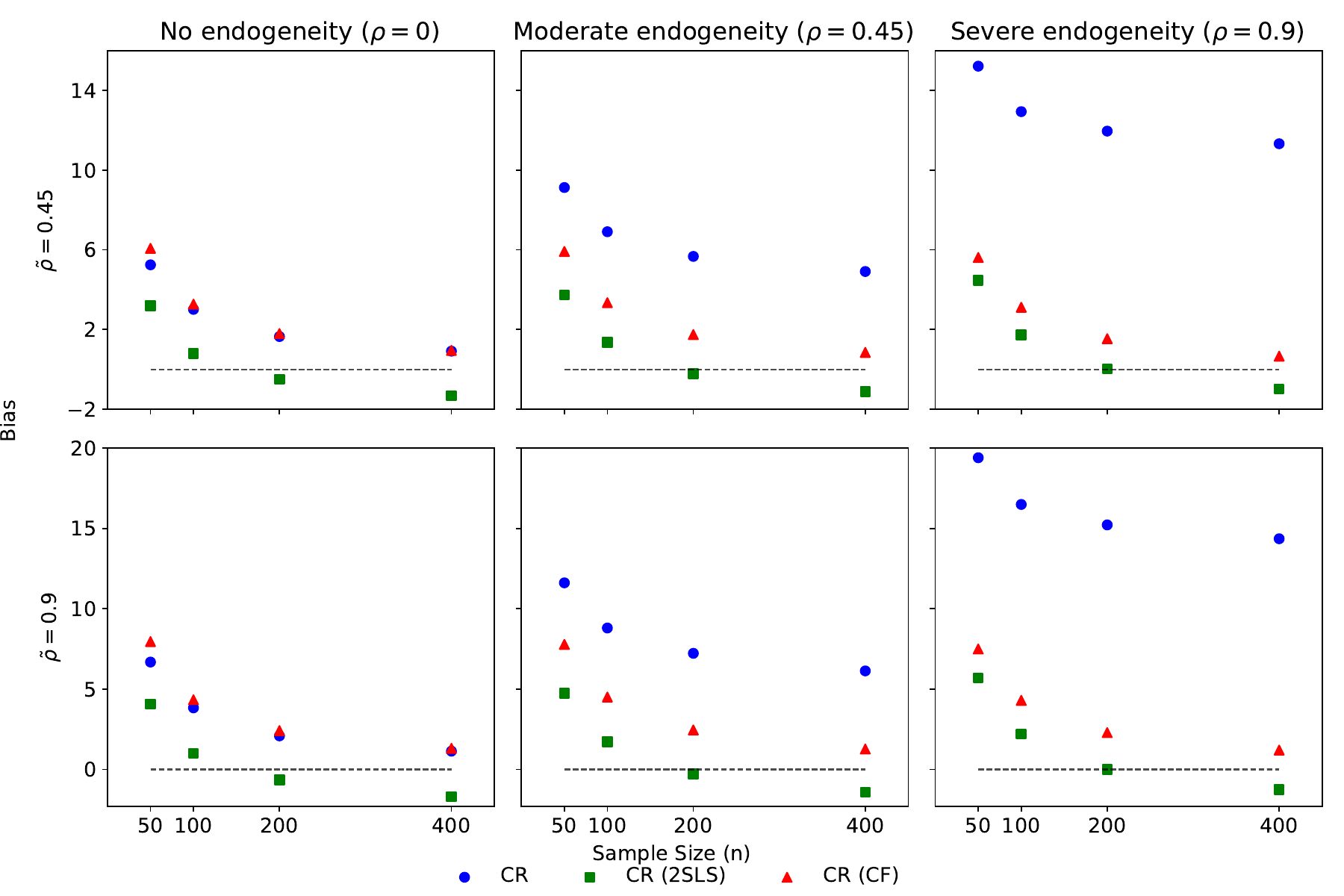}
    \caption{The bias statistic of convex regression estimators with $\sigma=2$ and $k=3$.}
    \label{fig:fig2}
\end{figure}

\subsection{Experiment 3}

Section~\ref{orth2} shows that the sample orthogonality condition cannot hold in the case of monotonicity, and the exact population orthogonality condition is also unknown. We thus design experiment 3 to test if the performance of convex regression with monotonicity improves or deteriorates when the IV and IV control function approaches are introduced. The same DGPs and scenario settings as Experiment 2 are considered. 
\begin{figure}[H]
    \centering
    \includegraphics[width=\linewidth]{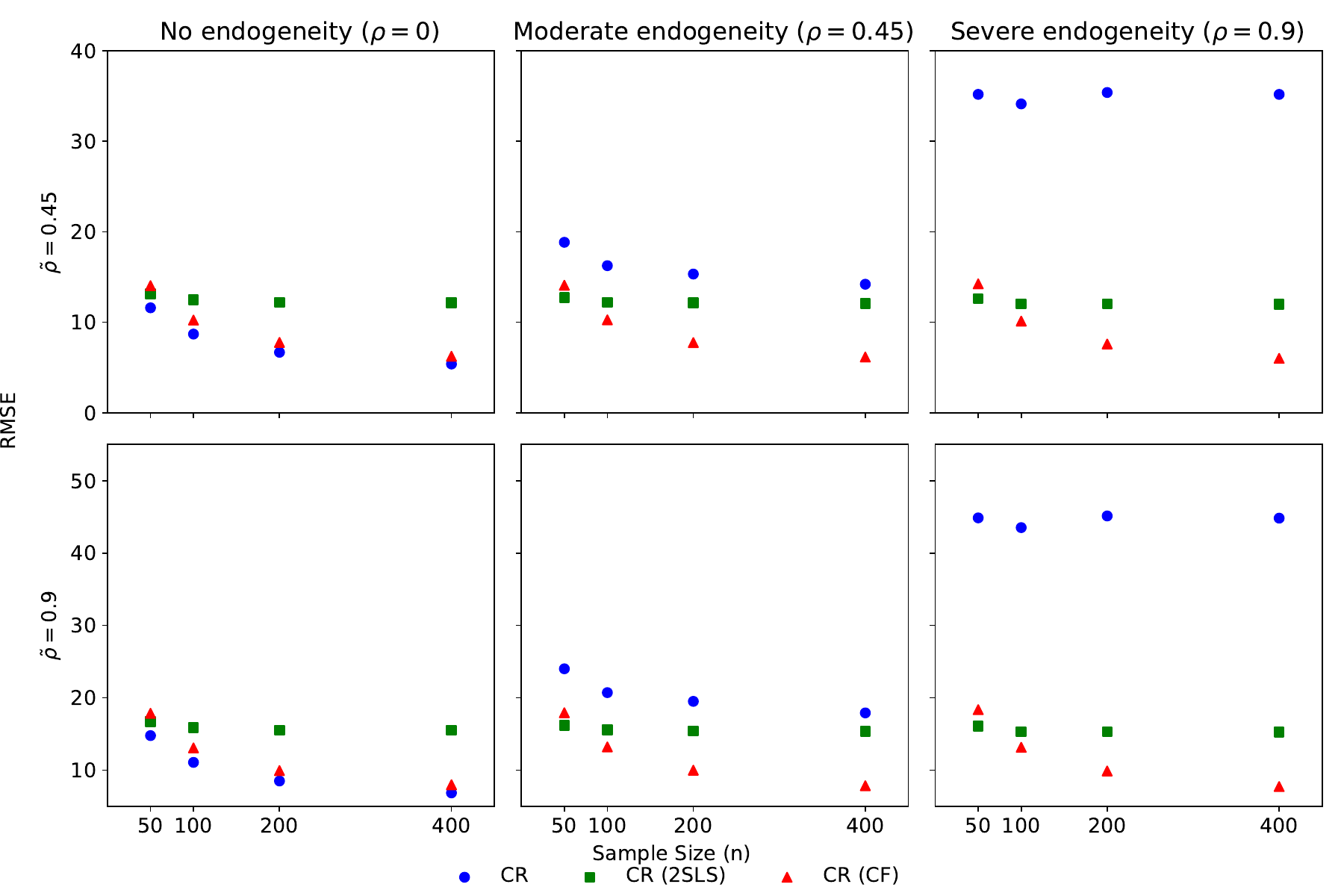}
    \caption{The RMSE statistic of convex regression estimators in the case of monotonicity with $\sigma_\varepsilon=2$ and $k=3$.}
    \label{fig:fig3}
\end{figure}

Fig.~\ref{fig:fig3} plots the RMSE statistic of three convex regression estimators in the case of monotonicity for the low noise and three-input setting (see Fig.~\ref{fig:figa2} for $k=2$). When the correlation coefficient $\rho=0$, convex regression outperforms the other two IV-based convex regression methods, as indicated by its lower RMSE. When there is endogeneity between $x_e$ and $\varepsilon$ in the DGPs, the IV control function convex regression produces the smallest RMSEs while accounting for monotonicity. The 2SLS-based convex regression performs better than convex regression in cases of severe endogeneity in $x_e$ ($\rho=0.9$) or more inputs (Fig.~\ref{fig:figa2}). Overall, while the sample orthogonality condition cannot be satisfied in the case of monotonicity, the IV-based or the IV control function-based convex regression can, to some extent, help to address the endogeneity problem, but the performance of the IV control function approach is more robust.

A further comparison of Figs.~\ref{fig:fig1} and \ref{fig:fig3} provides evidence that imposing monotonicity constraints in convex regression slightly decreases RMSEs and improves its performance across all scenarios, irrespective of whether IV or IV control function approaches are applied. This finding persists across various degrees of endogeneity and sample sizes, and also confirms the result discussed in Table~\ref{tab:tab1}. While imposing monotonicity constraints in convex regression theoretically violates the sample orthogonality condition, it often leads to improved performance in terms of RMSE. The better performance can be explained by several factors. When the DGP follows a monotonic and convex form, such as the Cobb-Douglas function, imposing monotonicity constraints brings the estimator closer to the true structure and reduces misspecification risk. In finite samples, the bias from violating orthogonality is often limited, while the variance reduction from the constraint can lead to better overall accuracy.

%-----------------%
%
%-----------------%

\section{Application}\label{sec:appl}

The existing studies typically rely on a control function to address endogeneity concerns in estimating the production function \citep[see, e.g.,][]{Olley1996, Levinsohn2003, Ackerberg2015}. It is thus worth examining whether the hybrid IV control function approach empirically makes a difference in convex regression. Furthermore, excluding the monotonicity assumption from convex regression can guarantee its sample orthogonality conditions. It is also interesting to investigate whether monotonicity is necessary in the endogeneity correction.

To explore the above two empirical objectives, we apply the proposed hybrid IV control function to the Chilean manufacturing production data, covering all firms with more than 10 employees from 1979--1996. Such census data conducted by Chile's National Institute of Statistics has been widely used in works such as \citet{Levinsohn2003}, \citet{Gandhi2020}, and \citet{Tsionas2022b}. 

We restrict our estimation sample to a subset of five manufacturing industries in the years 1995 and 1996, with ISIC codes 311 (Food Products), 321 (Textiles), 322 (Apparel), 331 (Wood Products), and 381 (Metals). For each industry, the sample data include the real gross output (measured as deflated revenues), labor (a weighted sum of blue- and white-collar workers), capital (measured by the perpetual inventory method), intermediate inputs (a sum of raw materials, energy, and service expenditures), and investment. A more detailed description of the data and the selected variables is available in \citet{Gandhi2020}. 

We start by testing if the endogeneity problem exists in production function estimation with Chilean manufacturing data. In doing so, we use the lagged investment as the instrument for capital input in the hybrid IV control function approach. Given the potential heterogeneity among manufacturing industries, we conduct separate estimations for each industry. The endogeneity test results in Table \ref{tab:tab3} indicate that capital input is an endogenous variable for all industries, at least at the 90\% significance level, regardless of whether monotonicity is relaxed. To correct the endogeneity bias in production function estimation, we need the hybrid IV control function approach. 
\begin{table}[H]
  \centering
  \caption{Endogeneity test for capital input.}
    \begin{tabular}{lccccc}
    \toprule
     $H_0: \hat{\lambda} = 0$ & 311   & 321   & 322   & 331   & 381 \\
    \midrule
    Without monotonicity & -0.065\sym{***} & -0.245\sym{***} & -0.036\sym{*}   & -0.108\sym{***} & -0.151\sym{***} \\
    With monotonicity  & -0.082\sym{***}   & -0.217\sym{***} & -0.070\sym{***} & -0.101\sym{***} & -0.153\sym{***} \\
    \bottomrule
    \\[-0.8em]
     \multicolumn{6}{l}{\footnotesize \sym{***} \(p<0.01\); \sym{*} \(p<0.1\).}\\
    \end{tabular}%
    \vspace{-1em}
  \label{tab:tab3}%
\end{table}%

We next report the median coefficients across observations for each industry in Table~\ref{tab:tab4}, where the 25th and 75th percentiles of the coefficients are shown in parentheses. For a thorough comparison, we discuss the estimates by the convex regression with monotonicity, the 2SLS-based convex regression with monotonicity, and the IV control function-based convex regression with and without monotonicity in Table~\ref{tab:tab4}. 
\begin{sidewaystable}
\begin{table}[H]\centering
\caption{Estimates of the production function.}
\fontsize{9.5pt}{11pt}\selectfont 
    \begin{tabular}{rccccccccc}
    \hline
    \multicolumn{1}{l}{\multirow{2}[4]{*}{Estimator}} & \multicolumn{3}{c}{Food Products (311)} & \multicolumn{3}{c}{Textiles (321)} & \multicolumn{3}{c}{Apparel (322)} \\
    \cmidrule{2-10}     & L     & K     & M     & L     & K     & M     & L & K & M \\
    \hline
    \multicolumn{1}{l}{CR} & 0.029 & 0.004 & 0.012 & 0.105 & 0.001 & 0.011 & 0.031 & 0.008 & 0.012 \\
          & [0.014, 0.039] & [0.002, 0.006] & [0.011, 0.013] & [0.013, 0.158] & [0.000, 0.003] & [0.008, 0.014] & [0.000, 0.067] & [0.000, 0.015] & [0.005, 0.016] \\
          \addlinespace
    \multicolumn{1}{l}{CR(IV)} & 0.017 & 0.012 & 0.008 & 0.084 & 0.012 & 0.004 & 0.035 & 0.011 & 0.008 \\
          & [0.012, 0.033] & [0.011, 0.012] & [0.005, 0.009] & [0.005, 0.151] & [0.005, 0.013] & [0.000, 0.007] & [0.001, 0.082] & [0.010, 0.015] & [0.000, 0.011] \\
          \addlinespace
    \multicolumn{1}{l}{CR1(CF)} & 0.025 & 0.006 & 0.012 & 0.075 & 0.007 & 0.010 & 0.031 & 0.011 & 0.011 \\
          & [0.010, 0.032] & [0.004, 0.009] & [0.010, 0.012] & [0.003, 0.116] & [0.002, 0.011] & [0.004, 0.012] & [0.000, 0.071] & [0.000, 0.018] & [0.005, 0.016] \\
          \addlinespace
    \multicolumn{1}{l}{CR2(CF)} & 0.027 & 0.006 & 0.012 & 0.071 & 0.009 & 0.010 & 0.052 & 0.012 & 0.013 \\
          & [0.010, 0.033] & [0.004, 0.008] & [0.011, 0.012] & [-0.050, 0.116] & [0.003, 0.016] & [0.006, 0.014] &[-0.059, 0.079] & [-0.001, 0.021] & [0.008, 0.020] \\
    \hline
    \addlinespace
    \hline
    \multicolumn{1}{l}{\multirow{2}[4]{*}{Estimator}} & \multicolumn{3}{c}{Wood Products (331)} & \multicolumn{3}{c}{Fabricated Metals (381)} &       &       &  \\
\cmidrule{2-7}          & L     & K     & M     & L     & K     & M     &       &       &  \\
\cmidrule{1-7}    \multicolumn{1}{l}{CR} & 0.022 & 0.002 & 0.011 & 0.139 & 0.005 & 0.016 &       &       &  \\
          & [0.002, 0.015] & [0.001, 0.003] & [0.009, 0.011] & [0.016, 0.180] & [0.001, 0.006] & [0.013, 0.018] &       &       &  \\
          \addlinespace
    \multicolumn{1}{l}{CR(IV)} & 0.013 & 0.010 & 0.003 & 0.130 & 0.014 & 0.009 &       &       &  \\
          & [0.003, 0.016] & [0.010, 0.011] & [0.002, 0.003] & [0.019, 0.168] & [0.013, 0.019] & [0.000, 0.010] &       &       &  \\
          \addlinespace
    \multicolumn{1}{l}{CR1(CF)} & 0.012 & 0.003 & 0.010 & 0.112 & 0.010 & 0.014 &       &       &  \\
          & [0.000, 0.020] & [0.002, 0.004] & [0.008, 0.011] & [0.005, 0.154] & [0.002, 0.016 & [0.009, 0.018] &       &       &  \\
          \addlinespace
    \multicolumn{1}{l}{CR2(CF)} & 0.012 & 0.004 & 0.010 & 0.113 & 0.011 & 0.014 &       &       &  \\
          & [-0.006, 0.015] & [0.002, 0.005] & [0.008, 0.011] & [-0.019, 0.172] & [0.002, 0.018] & [0.010, 0.021] &       &       &  \\
\cmidrule{1-7} 
\\[-0.5em]
\multicolumn{7}{l}{\footnotesize \textit{Notes}: CR: the convex regression with monotonicity;}\\
\multicolumn{7}{l}{\footnotesize \hspace*{3em} CR(IV): the 2SLS-based convex regression with monotonicity;}\\
\multicolumn{7}{l}{\footnotesize \hspace*{3em} CR1(CF): the IV control function-based convex regression with monotonicity;}\\
\multicolumn{7}{l}{\footnotesize \hspace*{3em} CR2(CF): the IV control function-based convex regression without monotonicity.}\\
\end{tabular}%
\label{tab:tab4}%
\end{table}%
\end{sidewaystable}

Compared to convex regression, the 2SLS-based convex regression leads to higher capital coefficients in all industries. For instance, the capital coefficients increase from a median of 0.004 to 0.012 for the Food Product industry and 0.002 to 0.01 for the Wood Products industry, respectively. This suggests that instrumenting capital input with lagged investment may yield a higher capital estimate, conforming with the findings in \citet{Collard2016}, even though a parametric estimation is employed.

After controlling for the impacts of simultaneity bias and mismeasurement in input, we again observe the higher median estimates for capital in all industries, regardless of whether monotonicity is considered. Likewise, for instance, the capital coefficients increase from a median of 0.004 to 0.006 (the IV control function approach with monotonicity) for the Food Product industry and 0.002 to 0.003 for the Wood Products industry, respectively. Furthermore, the capital median coefficients estimated by convex regression without monotonicity are larger than those by convex regression with monotonicity. The higher capital estimates may suggest that the IV control function is more effective in the absence of monotonicity, providing support for Theorems~\ref{theo2} and \ref{theo3}.

Regarding the labor estimate, we find that the median value decreases in all industries but Apparel, which seems to offset the increase in capital after using the IV control function approach or even the 2SLS estimation. For example, the median labor coefficient decreases from 0.029 by convex regression with monotonicity to 0.027 by the IV control function-based convex regression without monotonicity in the Food Products industry.

%-----------------%
%
%-----------------%

\section{Conclusions}\label{sec:conc}

This paper reviews and explains the sample orthogonality condition for the most standard specifications of convex regression. The orthogonality condition is critically important for understanding possible sources of endogeneity bias. The approach applied in this paper---leveraging the Lagrangian duality theory and orthogonality analysis---offers a potential framework for testing endogeneity of the nonparametric shape-constrained regression estimators.

Our analytical results indicate that the validity of the sample orthogonality condition in convex regression is not determined by the presence of a homogeneity constraint. The validity of the sample orthogonality condition depends solely on the first-order condition of the slope coefficients of the explanatory variables (i.e., $\bbeta_i$) and not on the intercept parameter (i.e., $\alpha_i$). Thus, a sign constraint on the intercepts of the regression hyperplanes does not affect the sample orthogonality condition of convex regression.

Imposing a sign constraint on the slope coefficients alters the sample orthogonality condition. Thus, a convex regression problem that includes a monotonicity condition violates the orthogonality condition. The covariance between the regression residuals and the explanatory variables is negative. Intuitively, this means that for higher values of the independent variables, more observation points fall below the estimated regression function.

The results for the convex regression with a multiplicative error term indicate that the usual sample orthogonality condition does not hold. However, when the explanatory variables are normalized by the fitted value (i.e., $\bx/f(\bx)$), then the sample orthogonality condition holds. Analogous to the additive model, the special case of the sample orthogonality condition is valid for convex regression with or without a linear homogeneity constraint. By imposing a monotonicity constraint, the sample orthogonality condition does not hold in the multiplicative model either.

Since the convex regression is not immune to endogeneity shown by the sample orthogonality condition analysis, we propose a hybrid IV control function approach to alleviate the endogeneity bias. The simulation results confirm the superiority of the proposed approach in endogeneity correction in comparison with the traditional 2SLS approach. The monotonicity is helpful in improving the accuracy of regression function estimation, even though it could introduce bias. The empirical application also suggests that the hybrid IV control function approach makes a difference in convex regression to mitigate the impact of endogeneity and that imposing monotonicity constraints can be useful in practice. 

%-----------------%
%
%-----------------%

\section*{Acknowledgments}\label{sec:ack}

The authors acknowledge CSC – IT Center for Science, Finland, for computational resources.

%-----------------%
%
%-----------------%

\baselineskip 12pt
\bibliographystyle{econ-econometrica}
\bibliography{References.bib} 

%-----------------%
%
%-----------------%

\clearpage
\newpage
\baselineskip 20pt
\section*{Appendix}\label{sec:app}

\renewcommand{\thesubsection}{\Alph{subsection}}
\renewcommand{\thetable}{B\arabic{table}}
\setcounter{table}{0}
\renewcommand{\thefigure}{B\arabic{figure}}
\setcounter{figure}{0}
\renewcommand{\theequation} {A.\arabic{equation}}
\setcounter{equation}{0}

%-----------------%
%
%-----------------%

\subsection{Proofs}\label{sec:proofs}

\textbf{Proof of Lemma \ref{prop1}}. By combining conditions \eqref{eq:eq5} and \eqref{eq:eq6}, there is $e_i = \dfrac{1}{2}\sum\limits_{h=1}^n(\lambda_{hi}-\lambda_{ih})$. By summing over all $n$ observations, we can confirm that $\sum\limits_{h=1}^n e_i=0$. \qed
\newline

\noindent\textbf{Proof of Theorem \ref{theo1}}. By combining \eqref{eq:eq5} and \eqref{eq:eq7} and summing over all $n$ observations, there is 
	\begin{equation}
		\sum\limits_{i=1}^ne_{i}x_{ik}=\sum\limits_{i=1}^n{\sum\limits_{h=1}^n{\lambda _{ih}x_{ik}}}-\sum\limits_{i=1}^n{\sum\limits_{h=1}^n{\lambda_{hi}x_{hk}}}
		\label{eq:a1}
	\end{equation} 
	By utilizing the properties of the sum operator to substitute indices $i$ and $h$ we know that $\sum\limits_{i=1}^n{\sum\limits_{h=1}^n{\lambda _{ih}x_{ik}}}=\sum\limits_{i=1}^n{\sum\limits_{h=1}^n{\lambda _{hi}x_{hk}}}$ and hence $\sum\limits_{i=1}^ne_ix_{ik}=0$.\qed
\newline

\noindent\textbf{Proof of Lemma \ref{prop2}}. Combining conditions \eqref{eq:eq5} and \eqref{eq:eq16}, we have $e_i=\hat{y}_{i}\left[ \sum\limits_{h=1}^n\lambda_{ih}-\sum\limits_{h=1}^n\lambda_{hi} \right]$. We then move $\hat{y}_{i}$ to the left-hand side and sum over all $n$ observations. Since $\sum\limits_{i=1}^n \sum\limits_{h=1}^n\lambda_{ih}-\sum\limits_{i=1}^n\sum\limits_{h=1}^n\lambda_{hi} = 0$, we can obtain $\sum\limits_{i=1}^n \dfrac{e_i}{\hat{y}_{i}} = 0$. \qed
\newline

\noindent\textbf{Proof of Theorem \ref{theo2}}. 
Summing over all $n$ observations in both~\eqref{eq:eq5} and \eqref{eq:eq17}, we have 
\begin{align}
	& \sum\limits_{i=1}^n e_i - \sum\limits_{i=1}^n\mu_i=0 \label{eq:a2}\\
	& -\sum\limits_{i=1}^n\mu_i\dfrac{x_{ik}}{\hat{y}_i}+\sum\limits_{i=1}^n\sum\limits_{h=1}^n \lambda_{ih} x_{ik} - \sum\limits_{i=1}^n\sum\limits_{h=1}^n \lambda_{hi}x_{hk} = 0 \label{eq:a3}
\end{align}
After combining Eqs.~\eqref{eq:a2} and \eqref{eq:a3}, the sample orthogonality condition becomes $\sum\limits_{i=1}^n e_i\dfrac{x_{ik}}{\hat{y}_i} =0$. \qed
\newline

\noindent\textbf{Proof of Theorem \ref{theo3}}. Summing Eq.~\eqref{eq:eq13} (or Eq.~\eqref{eq:eq22}) over all observations, and applying Theorem~\ref{theo1} (or Theorem~\ref{theo2}), we have $\sum\limits_{i=1}^n \eta_{ik} > 0$ and hence $\sum\limits_{i=1}^n e_i x_{ik} < 0$ (or $\sum\limits_{i=1}^n e_i \dfrac{x_{ik}}{\hat{y}_i} < 0$). \qed
\newline

\noindent\textbf{Proof of Theorem \ref{theo4}}. After considering the additive model with linear homogeneity constraints, we can eliminate the $\alpha_i$ variables from the problem as $\alpha_i = 0 (i=i, \ldots, n)$. The Lagrangian function is 
\begin{align*}
    \Ls(\blambda,\bbeta,\bmu)=&\dfrac{1}{2}\sum\limits_{i=1}^n \varepsilon_i^2+\sum\limits_{i=1}^n\mu_i(y_i-\bbeta_i^\prime\bx_i-\varepsilon_i) + \sum\limits_{i=1}^n{\sum\limits_{h=1}^n\lambda_{ih}(\bbeta _i-\bbeta_h)^\prime\bx_i}
\end{align*}

The corresponding first-order conditions are the following equations 
\begin{align}
& \dfrac{\partial \Ls}{\partial \varepsilon_i}=e_i - \mu_i=0, \forall I \label{eq:a4} \\
& \dfrac{\partial \Ls}{\partial \beta_{ik}}=-\mu_i x_{ik}+\sum\limits_{h=1}^n \lambda_{ih} x_{ik} - \sum\limits_{h=1}^n\lambda_{hi} x_{hk}=0, \forall i, k.  \label{eq:a5}
\end{align}

The result demonstrated in Theorem \ref{theo2} is still valid as it is dependent on the equations \eqref{eq:a4} and \eqref{eq:a5}, which are still active. Therefore, the additive model subject to concavity and linear homogeneity holds the orthogonality condition. However, after introducing monotonicity to this convex regression problem, the orthogonality condition will not be held (see Theorem \ref{theo3} for the proof). Similarly, this conclusion can be directly applied to the multiplicative models. \qed
\newline

\noindent\textbf{Proof of Theorem \ref{theo5}}. The regressor $\hat{\zeta}_i$ in the semi-nonparametric convex regression model \eqref{eq:eq4.2} is derived from the residuals of an auxiliary OLS regression defined in \eqref{eq:eq4.1}. Under standard assumptions for OLS estimation---specifically, independence and identical distribution of observations, full rank condition of the regressors, zero conditional expectation, and finite variance of the random error term---the estimator $\hat{\omega}$ is known to be consistent:
\begin{equation}
    \hat{\omega} \overset{p}{\rightarrow} \omega.
\end{equation}

It then follows directly from Slutsky's theorem that the residual estimates $\hat{\zeta}_i = x_{1i}-\hat{\omega}z_i$ consistently estimate $\zeta_i$, specifically:
\begin{equation}
    \hat{\zeta}_i \overset{p}{\rightarrow} \zeta_i.
\end{equation}

Consequently, the consistency proof of the semi-nonparametric convex regression estimator \eqref{eq:eq4.3} can be rigorously established by adapting and extending the empirical process-based uniform convergence results, identification conditions, and consistency presented in \citet{Seijo2011} or \citet{Lim2012}. This extension explicitly accounts for the consistent estimation of the auxiliary residual term $\hat{\zeta}_i$, thus maintaining the theoretical rigor and validity of the consistency of convex regression.

\newpage
\subsection{Additional figures and tables}\label{sec:addfigures}

\begin{table}[H]
  \centering
  \caption{Bias comparison between OLS and convex regression with $n=400$ standard
deviations in parentheses).}
  \footnotesize
  \setlength{\tabcolsep}{4.5pt}
    \begin{tabular}{llccccccccc}
    \toprule
    \multicolumn{1}{l}{\multirow{3}[6]{*}{\begin{tabular}{l} Endogeneity \\ $\rho$ \end{tabular}}}
 & \multicolumn{1}{l}{\multirow{3}[6]{*}{\begin{tabular}{l} Noise \\ $\sigma_\varepsilon$ \end{tabular}}} & \multicolumn{4}{c}{DGP I: $k=2$} &       & \multicolumn{4}{c}{DGP II: $k=3$} \\
\cmidrule{3-6}\cmidrule{8-11}          &       & \multicolumn{2}{c}{OLS} & \multicolumn{2}{c}{CR} &       & \multicolumn{2}{c}{OLS} & \multicolumn{2}{c}{CR} \\
\cmidrule{3-6}\cmidrule{8-11}          &       & \multicolumn{1}{c}{logs} & \multicolumn{1}{c}{levels} & \multicolumn{1}{c}{conc+mon} & \multicolumn{1}{c}{conc} &       & \multicolumn{1}{c}{logs} & \multicolumn{1}{c}{levels} & \multicolumn{1}{c}{conc+mon} & \multicolumn{1}{c}{conc} \\
    \midrule
    0     & 0.5   & 0.00  & 3.39  & -0.06 & -0.03 &       & 0.01  & 3.33  & 0.08  & 0.20 \\
          &       & (0.68)  & (0.83)  & (0.67)  & (0.67)  &       & (0.64)  & (0.78)  & (0.64)  & (0.64) \\
          & 1     & 0.03  & 16.53 & 0.03  & 0.12  &       & 0.08  & 16.26 & 0.33  & 0.77 \\
          &       & (1.36)  & (3.08)  & (1.54)  & (1.33)  &       & (1.28)  & (2.80)  & (1.28)  & (1.30) \\
          & 2     & 0.25  & 161.48 & 0.29  & 0.67  &       & 0.33  & 147.31 & 0.86  & 2.14 \\
          &       & (2.77)  & (62.70) & (2.69)  & (2.70)  &       & (2.51)  & (71.35) & (2.53)  & (2.71) \\
    0.9   & 0.5   & 5.30  & 7.00  & 3.18  & 3.19  &       & 4.09  & 5.51  & 2.92  & 2.96 \\
          &       & (0.90)  & (1.16)  & (0.61)  & (0.61)  &       & (0.39)  & (0.44)  & (0.35)  & (0.35) \\
          & 1     & 14.74 & 23.86 & 5.32  & 5.41  &       & 12.55 & 20.68 & 5.91  & 6.28 \\
          &       & (3.50)  & (5.61)  & (1.43)  & (1.44)  &       & (1.11)  & (1.64)  & (0.79)  & (0.79) \\
          & 2     & 56.64 & 152.92 & 7.86  & 8.34  &       & 40.88 & 111.04 & 7.68  & 9.28 \\
          &       & (11.01) & (32.33) & (2.04)  & (2.11)  &       & (19.39) & (53.59) & (3.85)  & (4.58) \\
    \bottomrule
    \\[-0.8em]
    \multicolumn{11}{l}{\footnotesize \textit{Notes}: logs: OLS regressions in logarithmic form; levels: OLS regressions in level form;} \\
    \multicolumn{11}{l}{\footnotesize \hspace*{1cm} conc+mono: the multiplicative model subject to monotonicity and concavity;}\\
    \multicolumn{11}{l}{\footnotesize \hspace*{1cm} conc: the multiplicative model subject to concavity.}\\
    \end{tabular}%
  \label{tab:a1}%
\end{table}%

\begin{figure}[H]
    \centering
    \includegraphics[width=0.9\linewidth]{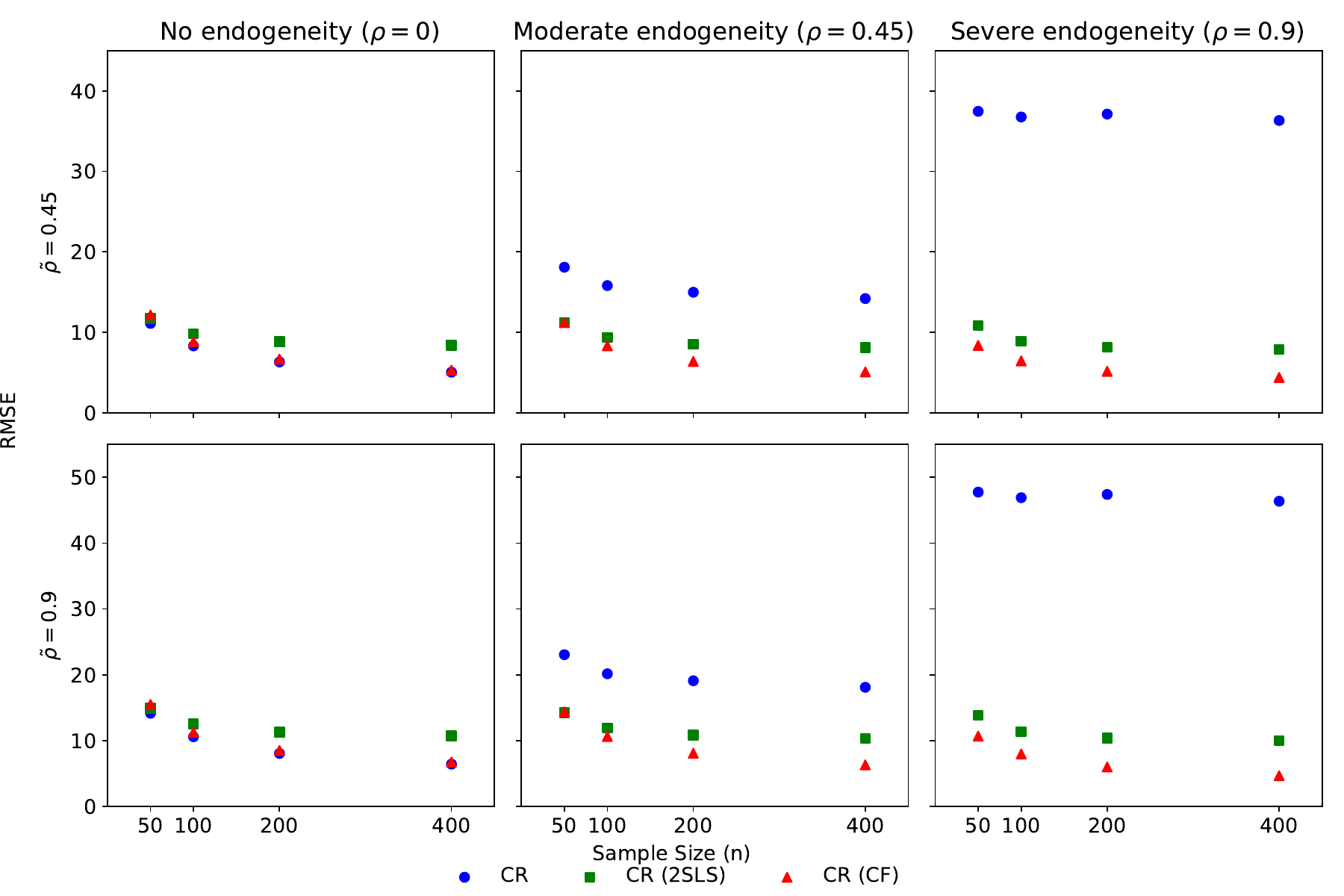}
    \vspace{-.2cm}
    \caption{The RMSE statistic of convex regression estimators with $\sigma=1$ and $k=3$.}
    \label{fig:figa1}
\end{figure}
\vspace{-0.5cm}

\begin{figure}[H]
    \centering
    \includegraphics[width=0.9\linewidth]{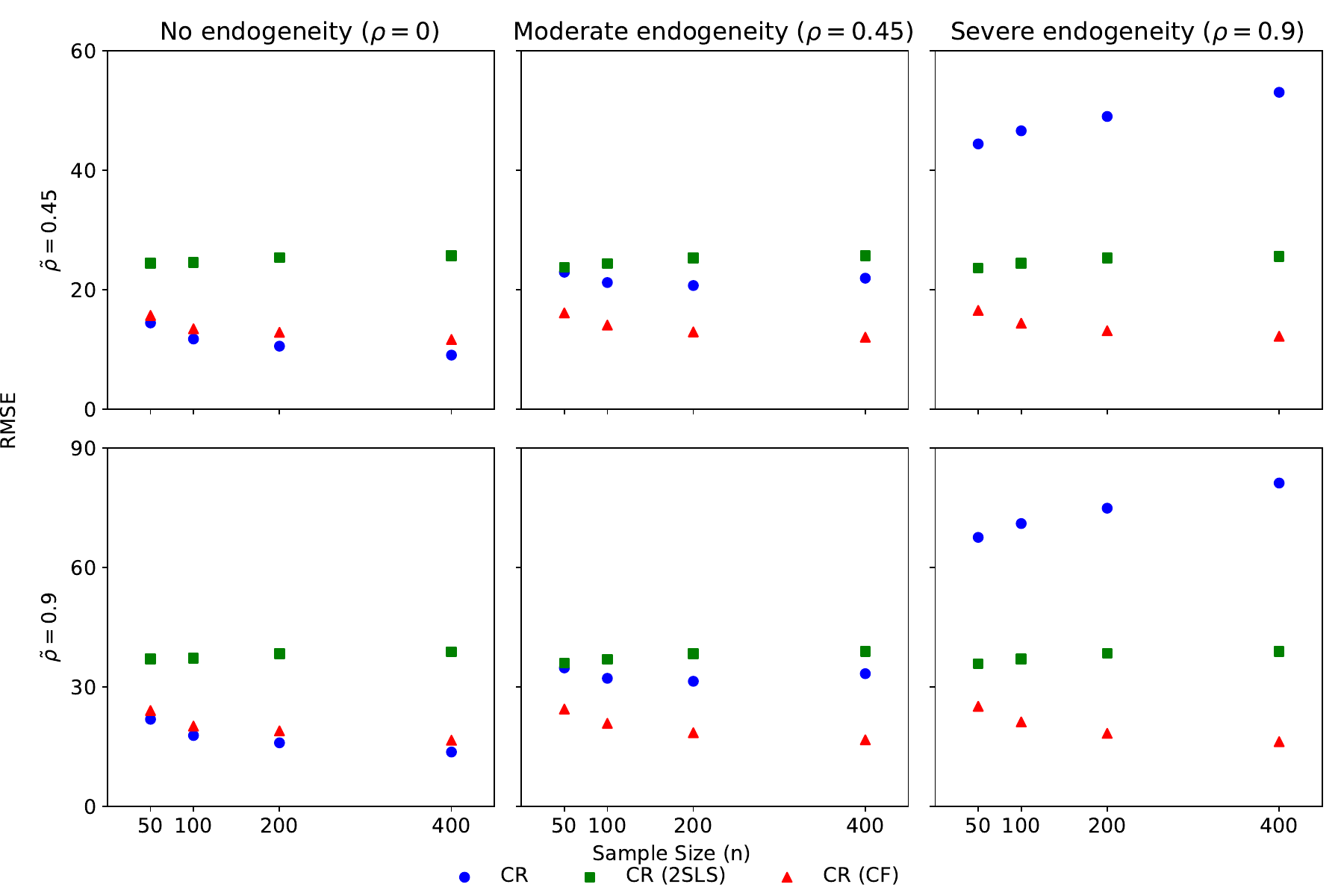}
    \vspace{-.2cm}
    \caption{The RMSE statistic of convex regression estimators in the case of monotonicity with $\sigma_\varepsilon=2$ and $k=2$.}
    \label{fig:figa2}
\end{figure}

\end{document}